\newcounter{protocol}
\newenvironment{protocol}[1]
    {\par\addvspace{\topsep}
   \noindent
   \tabularx{\linewidth}{@{} X @{}}
    \hline \textbf{Protocol} \refstepcounter{protocol}\textbf{\theprotocol} #1 \\
    \hline}
  { \\
    \hline
   \endtabularx
   \par\addvspace{\topsep}
    }
\newtheorem{theorem}{Theorem}
\newtheorem{lemma}[theorem]{Lemma}
\newtheorem{corollary}[theorem]{Corollary}
\newtheorem{definition}{Definition}
\newenvironment{customthm}[1]{%
  \par\noindent\textbf{Corollary #1.}\itshape\ }
  {\par\normalfont}
\newcommand{\vars}[1]{\mathcal{#1}}
\newcommand{\channel}{\mathcal{E}}
\newcommand{\Hil}{\mathcal{H}}
\newcommand{\cnorm}{\hat{C}}
\newcommand{\etath}{\eta_{\mathrm{th}}}
\newcommand{\etatol}{\eta_{\mathrm{tol}}}
\newcommand{\ctolnorm}{\hat{C}_{\mathrm{tol}}}
\newcommand{\cth}{C_{\mathrm{th}}}
\newcommand{\cthnorm}{\hat{C}_{\mathrm{th}}}
\begin{document}

\title{Efficient Multi-basis Quantum Position Verification Secure against Generalized Adversaries}






\author{Wen Yu Kon}
\affiliation{Global Technology Applied Research, JPMorganChase}

\author{Ignatius William Primaatmaja}
\affiliation{Global Technology Applied Research, JPMorganChase}

\author{Kaushik Chakraborty}
\affiliation{Global Technology Applied Research, JPMorganChase}

\author{Charles Lim}
\affiliation{Global Technology Applied Research, JPMorganChase}

\date{\today}

\begin{abstract}
Quantum position verification (QPV) enables multiple verifiers to certify a prover’s location using quantum communication and physical assumptions. With experimental demonstrations of QPV becoming increasingly feasible, enhancing the practicality and security of QPV protocols is more important than ever. In this work, we make three key contributions toward this goal. First, we introduce a robust QPV protocol in which the verifier’s state preparation is independent of channel loss, improving reliability in real-world conditions. Second, we refine existing security analysis techniques to bolster protocol resilience against experimental imperfections. Third, we identify and address some implicit assumptions present in existing security analyses, providing a framework to eliminate such assumptions. Additionally, as an example of QPV application beyond location verification, we illustrate how QPV can be leveraged for authenticating classical communication in quantum key distribution.
\end{abstract}

\maketitle

\section{Introduction}

Certifying the location of a party can be important for many applications such as location-based services, fraud prevention, and geographic access control.
While classical position verification protocols exist, they require assumptions such as pre-shared key or bounded classical memory to achieve security~\cite{Chandran2009}.
Therefore, \emph{quantum position verification} (QPV) has attracted significant attention from the community due to its reliance on weaker physical assumptions.
Following the first proof of security against linearly entangled adversaries~\cite{Bluhm2022}, subsequent work has established security under various conditions, including slow quantum communication~\cite{Llorenc2023,Llorence2025_Lossy}, channel loss tolerance~\cite{ABB23}, and adversaries with linear quantum gates~\cite{Asadi2025}.
More recently, preliminary results for the implementation of SWAP QPV have been presented~\cite{Kanneworff2025}, signaling interest to bring the protocol from theory to reality. Consequently, enhancing the practicality and security of QPV protocols has become increasingly critical.

One important challenge for QPV is its sensitivity to loss, with QPV protocols relying on BB84 states tolerating at most \SI{50}{\percent} loss~\cite{Bluhm2022,Llorenc2023}.
To improve loss tolerance, QPV protocols with matching state preparation and measurement basis~\cite{Llorenc2023} have been proposed, where the loss tolerance scales as $\sim 1/m$ with $m$ being the number of basis choices.
In this work, we propose a multi-basis QPV protocol where six states ($\{\ket{0},\ket{1},\ket{+},\ket{-},\ket{+i},\ket{-i}\}$) are prepared, but with multiple measurement bases.
This design offers three advantages: (1) reduced experimental complexity due to fewer prepared states, (2) enhanced flexibility since state preparation and measurement bases need not match, and (3) elimination of the requirement for the verifiers to securely communicate the basis choice for state preparation.
Importantly, these advantages come without additional penalty, since the proposed protocol matches the performance of earlier proposals in Ref.~\cite{Llorenc2023}.

To prove security of the proposed protocol against entangled adversaries, we adapt and refine the analysis presented in Ref.~\cite{Llorenc2023}.
We introduce two modifications that can ease the implementation requirements of QPV.
Firstly, we tighten the lower bound on the trace distance between two sets of states by formulating it as a semidefinite program (SDP) using methods from Ref.~\cite{wang2019characterising} and linear approximation.
Secondly, we provide a more precise construction of classical rounding with a fixed size of the output set for multi-basis QPV.
Together, these improvements can increase the error tolerance, thereby easing QPV implementation.

While the security analysis adapted from Ref.~\cite{Llorenc2023} covers a broad class of adversaries, it relies on several implicit assumptions.
In this work, we generalize the adversary model by addressing three assumptions: (1) adversaries use only pure states and unitary ($q$-qubit pure strategy), (2) adversaries do not share unbounded classical randomness, and (3)  transmission $\eta$ is independent of inputs $(x,y)$ and shared randomness.
The first assumption is addressed by purifying the attack strategy, which allows us to map any mixed state strategy to an equivalent pure state strategy.
The second and third assumptions are addressed through a different partitioning strategy.
Unlike the original analysis which partitions events based solely on error rate, we add further partitions based on transmission, yielding 4 partitions having high/low error with high/low transmission. While this allows for proper accounting of transmission and shared randomness, we find that loss tolerance degrades significantly compared to the original adversary model.

QPV can have applications beyond location verification. 
Here, we illustrate one such application: using QPV as a means of authentication in \emph{quantum key distribution} (QKD)~\cite{Buhrman2014}, either when standard Wegman-Carter authentication~\cite{Carter1977_WegmanCarter} fails or when location-based credentials are required.

Recently, Ref.~\cite{Llorenc2025} introduced a tightened security
analysis for QPV without qubit loss and achieved a
higher error tolerance. Extending their approach to lossy channels and incorporating our generalizations and improvements remain open questions.

For clarity, we summarize our main contributions in this manuscript relative to Ref.~\cite{Llorenc2023,Llorenc2025}. \textbf{Protocol-level novelty:} We introduce a multi-basis QPV protocol which uses only six prepared states and multiple measurement bases that utilize the asymmetric Bell expectation score. This reduces experimental complexity while enhancing flexibility by decoupling state preparation from measurement basis without any compromise in security. \textbf{Proof-level novelty:} We develop a refined security analysis by tightening trace-distance bounds, and a modified classical rounding argument, leading to an improved error and loss tolerance. \textbf{Model-level novelty:} We generalize the adversary model to allow for mixed states, unbounded shared randomness, and input-dependent transmission, exposing limitations of prior analyses and clarifying the scope of our rigorous security guarantees.

\section{Results}

\subsection{Quantum Position Verification}

\begin{figure}
    \centering
    \includegraphics[width=0.75\linewidth]{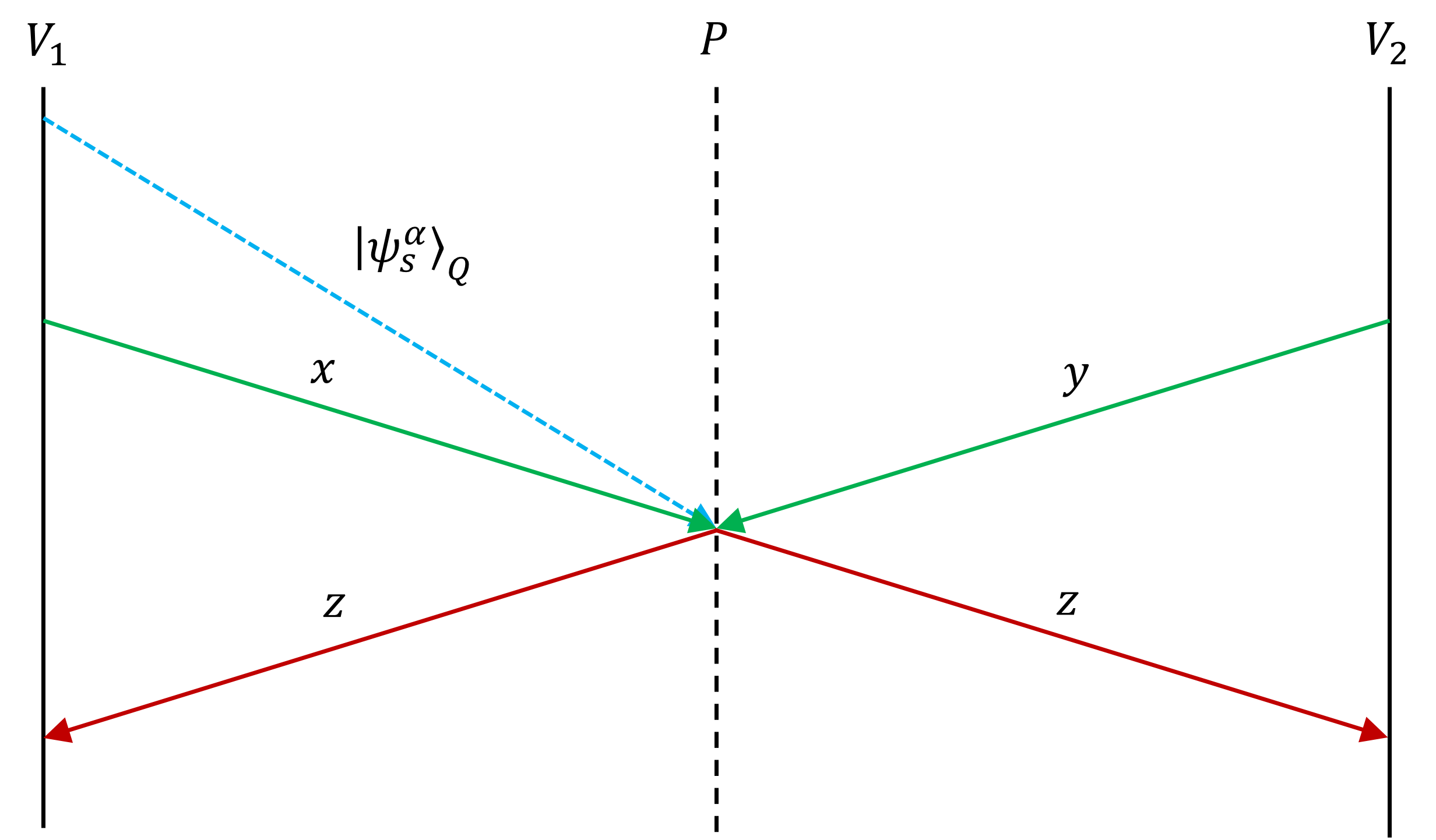}
    \caption{Spacetime diagram illustrating the classical and quantum communication in $QPV_{BB84}^{\eta,f}$ QPV protocol. Note that the solid lines represent classical communication which occurs at the speed of light while the dotted lines represent quantum communication.}
    \label{fig:QPV_Original_Protocol}
\end{figure}

The goal of QPV is to certify that a prover is at a specified location $P$. For simplicity, we consider the scenario where $P$ is co-linear with two trusted verifiers $V_1$ and $V_2$ who share a secure communication channel.
A loss-tolerant QPV protocol with transmission $\eta$ can be used to certify the prover's location, labeled $QPV_{BB84}^{\eta,f}$~\cite{Llorenc2023} (illustrated in Figure~1):
\renewcommand{\labelenumi}{Step \arabic{enumi}.}
\begin{enumerate}
    \item Verifier Preparation: Verifiers $V_1$ and $V_2$ randomly select $x\in\{0,1\}^n$ and $y\in\{0,1\}^n$. $V_2$ sends $y$ to $V_1$ via a secure communication channel. Verifiers also agree on a target arrival time $t_P$ for their classical messages.
    \item \textbf{Quantum State Transfer}: The first verifier computes $\alpha=f(x,y)$ using evaluation function $f$ and randomly selects $s\in\{0,1\}$. The first verifier then prepares a BB84 state $\ket{\psi^{\alpha}_{s}}_Q$ with basis $\alpha$ and bit value ${s}$ and sends it to the prover.
    \item \textbf{Verifier Message}: Verifiers 1 and 2 send $x$ and $y$ {respectively} such that they both arrive at $P$ at time $t_P$.
    \item Prover Measurement: The prover computes $\alpha=f(x,y)$ and measures the quantum system $Q$ in basis $\alpha$. The prover immediately sends the outcome $z$ ($z=\perp$ if no photon is detected) to both verifiers.
    \item \textbf{Timing and Validity Check}: The verifiers check if the time they receive the response $z$ is within a certain time threshold and if the responses between verifiers match. 
    \item \textbf{Parameter Estimation}: Repeat steps 1 to 5 for multiple rounds, and compute the error rate (the proportion of rounds where $s\neq z$) and the transmission  $\eta$ (the proportion of the rounds where $z \neq \perp$). The prover's position is verified if the error rate is below some threshold and all timing and validity checks pass.
\end{enumerate}
\renewcommand{\labelenumi}{\arabic{enumi}.}
The security of the QPV protocol relies on the observation that (1) parties not at location $P$ cannot obtain $x$ and $y$ before the party at location $P$ and (2) quantum information in $Q$ cannot be duplicated, thereby preventing parties not at $P$ from providing correct responses $z$ to both verifiers.

\subsection{Modeling the adversaries' strategy}
For QPV protocols such as the $QPV^{\eta, f}_{BB84}$ protocol, we can without any loss of generality (WLOG) consider two adversaries, Alice and Bob, located to the left and to the right of position $P$, respectively. The most general attack that Alice and Bob can perform proceeds as follows~\cite{Llorenc2023} (summarized in Figure~2):
\renewcommand{\labelenumi}{Step \arabic{enumi}.}
\begin{enumerate}
    \item Pre-share initial state: Alice and Bob pre-share the quantum state $\rho_{AB}$ where the register $A$ is kept in Alice's location and the register $B$ is kept in Bob's location. In general, the state $\rho_{AB}$ may be entangled.
    \item Local operations: After verifier $V_1$ sends the quantum register $Q$ and the input $x$, Alice applies a \textit{completely positive and trace preserving} (CPTP) map $\vars{E}_{QA}^x$ on the registers $Q$ and $A$. Similarly, Bob applies the CPTP map $\vars{E}_B^{y}$ on the register $B$ after the verifier $V_2$ sends the input $y$. Note that the choice of CPTP map can depend on the input accessible to each party.
    \item Communication: Alice splits her register into two separate sub-systems $A_k$ and $A_s$ and Bob similarly splits his register into $B_k$ and $B_s$. Alice and Bob keep the registers $A_k$ and $B_k$ at their respective locations, while exchanging sub-systems $A_s$ and $B_s$ along with a copy of the inputs $x$ and $y$. (Subscript $k$: keep, $s$: send)
    \item Measurement: After receiving both inputs $x$ and $y$ and the exchanged quantum sub-systems, Alice and Bob perform local measurements on their respective quantum registers. These measurements can be described by \textit{positive operator-valued measures} (POVMs) $\{A^{xy}_z\}_z$ and $\{B^{xy}_z\}_z$.
\end{enumerate}
\begin{figure}
    \centering
    \includegraphics[width=0.75\linewidth]{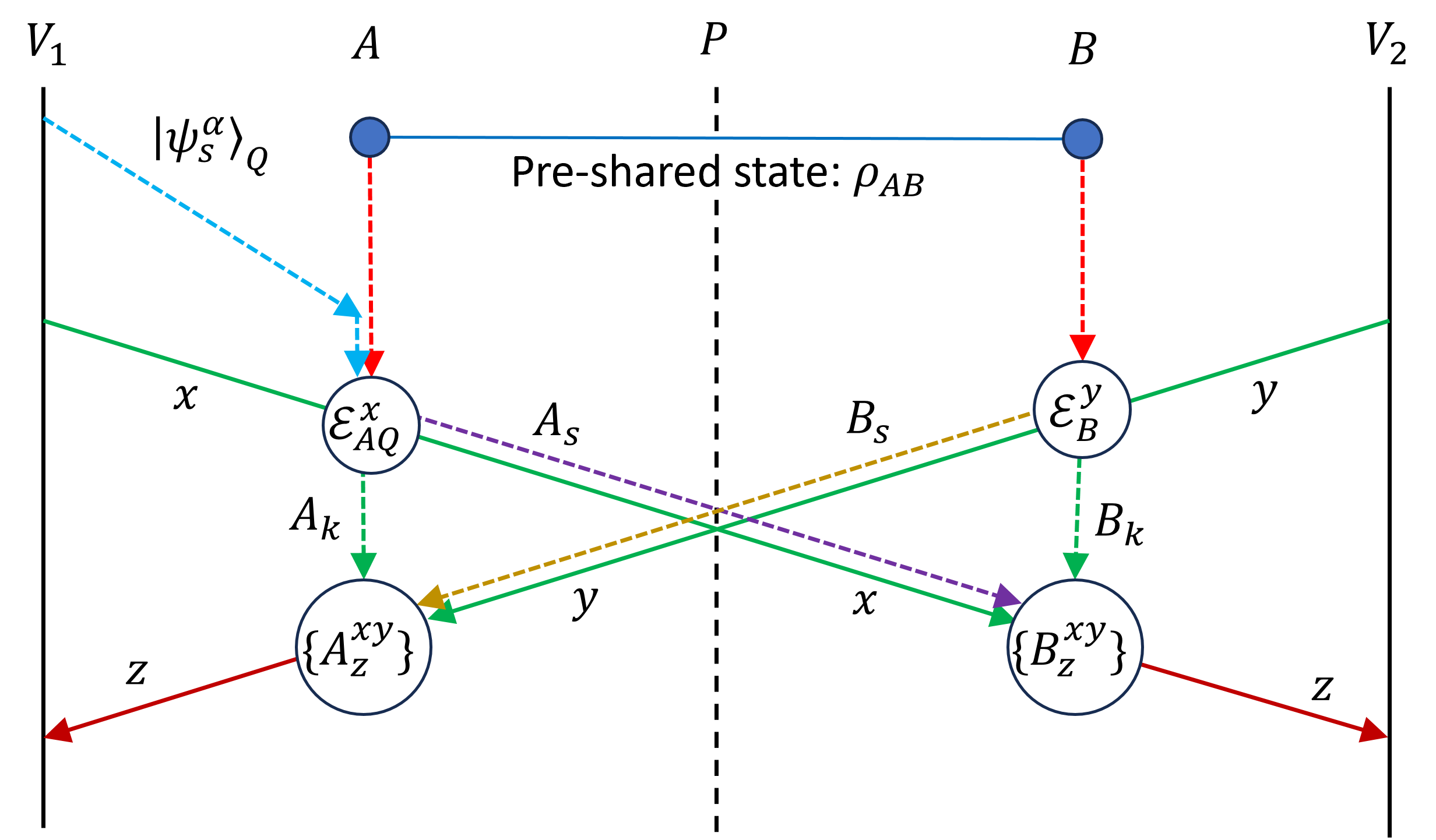}
    \caption{Spacetime diagram illustrating the generalized attack strategy. The dotted lines represent quantum information while solid lines represent classical information transmitted. The circles $\channel_{AQ}^x$ and $\channel_B^y$ represent the Alice's and Bob's location operations while the circles $\{A^{xy}_z\}$ and $\{B^{xy}_z\}$ represent their measurements. Note that lines $A_s$ and $x$, and $B_s$ and $y$ are overlapping (they are presented as split for clarity).}
    \label{fig:Main_Protocol_Attack}
\end{figure}
\renewcommand{\labelenumi}{\arabic{enumi}.}
The adversary strategy can thus be characterized using the tuple $$\vars{A} = \{\rho_{AB}, \{\vars{E}_{QA}^x\}_x, \{\vars{E}_B^{y}\}_y, \{A_{a}^{xy}\}_{a, x,y},\{B_b^{xy}\}_{b,x,y}, \vars{S}_{A\rightarrow A_kA_s}, \vars{S}_{B\rightarrow B_kB_s}\},$$
where $\vars{S}_{A\rightarrow A_kA_s}$ and $\vars{S}_{B\rightarrow B_kB_s}$ are quantum channels describing the splitting of Alice's and Bob's registers.
It is well known that when endowed with an unbounded amount of entangled states, the adversaries can perform an attack based on \textit{port-based teleportation} to break any QPV protocol~\cite{Beigi2011,Buhrman2014}. Therefore, QPV protocols can only be secure when considering restricted adversaries. In this work, we restrict ourselves to the case where Alice and Bob have bounded quantum memory, which in turn gives an upper bound on the amount of entanglement pre-shared by Alice and Bob.

When analyzing security against entangled adversaries, we consider an adversarial model similar to that of Ref.~\cite{Llorenc2023}, except that we adjust it to account for parameter estimation based on scores (instead of error rates), as well as tightening the security bounds. We call this adversarial strategy a $q$\textit{-qubit restricted strategy} as this model considers adversaries which are limited to pure $q$-qubit pre-shared quantum states and they can only perform local unitary operations. In this model, the adversaries have the additional constraint that the ``no-detection'' response is independent of the inputs $x$ and $y$. In contrast, in a later section generalizing the security against entangled adversaries, we allow the adversaries to pre-share mixed quantum states (with a bounded dimension) as well as an unbounded amount of pre-shared randomness. We also allow the adversaries to apply any local CPTP maps that may depend on the shared randomness. We also remove the additional assumption on the ``no-detection'' response. We call such a strategy a $q$-\textit{qubit general strategy}. Table 1 below highlights the differences between these adversarial models.

\begin{table*}[h]
    \centering
    \includegraphics[width=\linewidth]{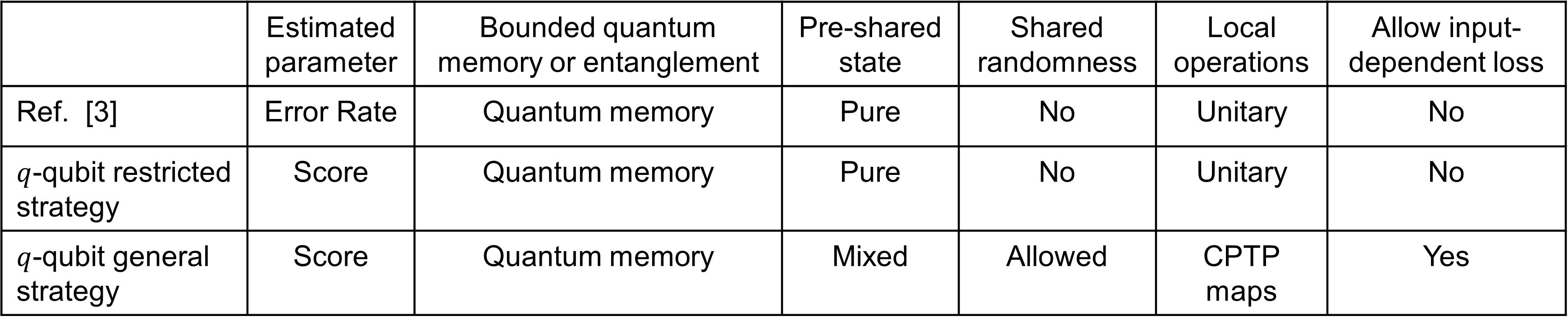}
    \caption{Comparison between the adversarial models in Ref.~\cite{Llorenc2023}, the $q$-qubit pure strategy (in the section on security against entangled adversaries) and the $q$-qubit mixed strategy (in the section on generalization of security against entangled adversaries).}
    \label{tab: adversarial model comparison}
\end{table*}

\subsection{Simple Multi-basis QPV}

In general, QPV protocols where a single slow quantum state is sent to the prover have loss tolerance that scales as $\eta\sim 1/m$, where $m$ is the number of basis choices.
Using the BB84 states and measurements leads to a \SI{50}{\percent} loss tolerance, which may be challenging to implement in practice.
Consequently, the use of multi-basis QPV protocols, such as that proposed in Ref.~\cite{Llorenc2023} with matching preparation and measurement basis distributed over the Bloch sphere (along with a change in $f$ to output values in $[m]=\{1,\cdots,m\}$), may be necessary.
The $\eta \sim 1/m$ loss tolerance is due to a general attack where the adversary can randomly select a basis to measure, and post-selects on rounds where the measured basis matches the computed basis. Interestingly, this attack only scales with the number of measurement bases, but not the number of prepared states.

We therefore propose a multi-basis QPV (Protocol 1) where six states ($\{\ket{0},\ket{1},\ket{+},\ket{-},\ket{+i},\ket{-i}\}$) are prepared and sent, and $m$ measurement bases on the Bloch sphere are utilized.
One could also consider a variant in which the verifier prepares the BB84 states and the prover's measurement bases lie in the XZ-plane of the Bloch sphere. However, this variant generally has lower noise tolerance, despite being slightly simpler to implement.

Unlike the multi-basis QPV protocol presented in Ref.~\cite{Llorenc2023}, our protocol does not require the verifier's preparation basis to match the prover's measurement basis. Since the basis choices can be mismatched, the prover's measurement outcome need not perfectly correlate with the verifier's bit value, even in the absence of noise. Therefore, error rate is no longer a suitable measure of the correlation between the verifier and the prover, and we have to generalize to the notion of a score.

Let $\alpha$ be the verifier's preparation basis and $\alpha'$ be the prover's measurement basis with basis vectors
\begin{equation}
    \begin{split}
        \ket{\psi^{\alpha'}_0} &=\cos \left(\frac{\theta_{\alpha'}}{2} \right)\ket{0}+\sin \left(\frac{\theta_{\alpha'}}{2} \right)e^{i\varphi_{\alpha'}}\ket{1},\\
        \ket{\psi^{\alpha'}_1} &= \sin\left(\frac{\theta_{\alpha'}}{2}\right) \ket{0} - \cos \left(\frac{\theta_{\alpha'}}{2}\right) e^{i \varphi_{\alpha'}} \ket{1},
    \end{split}
\end{equation}
where $\theta_{\alpha'}$ and $\varphi_{\alpha'}$ are the polar and azimuthal angles of the measurement basis on the Bloch sphere.
We propose the use of the \textit{asymmetric Bell expectation} (ABE) score detailed in Ref.~\cite{sekatski2025certification} for BB84 state preparation, given by an average of the basis score $C=\frac{1}{m}\sum_{\alpha'=1}^mC_{\alpha'}$,
\begin{equation*}
    C_{\alpha'}=\frac{1}{2}\sum_{s,z\in\{0,1\}}(-1)^{s+z}[\cos\theta_{\alpha'}\mel{\psi^Z_s}{P^{\alpha'}_z}{\psi^Z_s}+\sin\theta_{\alpha'}\mel{\psi^X_s}{P^{\alpha'}_z}{\psi^X_s}],
\end{equation*}
where $\left\{P^{\alpha'}_z= \ketbra{\psi_z^{\alpha'}}{\psi_z^{\alpha'}}\right\}_z $ denote the prover's measurement operators in basis $\alpha'$.

Here, we generalize the ABE score to the six-state preparation,
\begin{equation}
\begin{split}
    C_{\alpha'}=&\frac{1}{2}\sum_{s,z\in\{0,1\}}(-1)^{s+z} \left[\cos\theta_{\alpha'}\mel{\psi^Z_s}{P^{\alpha'}_z}{\psi^Z_s}+\sin\theta_{\alpha'}\cos\varphi_{\alpha'}\mel{\psi^X_s}{P^{\alpha'}_z}{\psi^X_s}\right.\\
    &\left.+\sin\theta_{\alpha'}\sin\varphi_{\alpha'}\mel{\psi^Y_s}{P^{\alpha'}_z}{\psi^Y_s} \right],
\end{split}
\end{equation}
where we note that $C_{\alpha'}\leq\eta$. 
We can also consider the normalized version of the ABE score, denoted by $\cnorm$
\begin{equation}
    \cnorm = \frac{1}{\eta} C.
\end{equation}
In the protocol, the verifiers monitor both the transmission $\eta$ and the normalized score $\cnorm$, and compare them to pre-determined thresholds, $\etatol$ and $\ctolnorm$, respectively.

\begin{figure*}
\begin{protocol}{QPV with multiple measurement basis}
\begin{enumerate}
    \item For $i=1,\cdots,N$, repeat the following process:
    \begin{enumerate}
        \item \textbf{Verifier Preparation}: At the start of round $i$, the first and second verifiers randomly select ${x_i}\in[2^n]$ and ${y_i}\in[2^n]$ respectively. Verifiers also communicate and agree on a time $t_{P,i}$.
        \item \textbf{Quantum State Transfer}: A third party (or first verifier) prepares a quantum system $\ket{\psi^{{\alpha}}_s}_Q$, which corresponds to one of the six states, $\{\ket{0},\ket{1},\ket{+},\ket{-},\ket{+i},\ket{-i}\}$ with basis $\alpha=X,Y,Z$ and bit value $s\in[0,1]$. The third party sends the quantum system to the prover.
        \item \textbf{Verifier Message}: $V_1$ and $V_2$ send ${x_i}$ and ${y_i}$ respectively at $t_{P,i}-d_{V_1P}/c$ and $t_{P,i}-d_{V_2P}/c$ such that they both arrive at $P$ at time $t_{P,i}$.
        \item \textbf{Quantum Measurement}: The prover computes ${\alpha'_i}=f(x_i,y_i)$ and measures quantum system $Q$ in the computed basis. If no photon is detected, it sets $z_i=\perp$. The prover sends the measurement outcome $z_{1,i}=z_{2,i}=z_i$ to the verifiers.
        \item \textbf{Timing and Validity Check}: $V_1$ and $V_2$ record the arrival time of the respective responses $z_{1,i}$ and $z_{2,i}$ as $t_{z_1,i}$ and $t_{z_2,i}$. The verifiers first check if the responses are valid, i.e. (1) $z_{1,i}=z_{2,i}$. The verifiers then check if the timings are valid, (2) $t_{z_1,i}-t_{P,i}\leq d_{V_1P}/c+t_{\delta}$, and (3) $t_{z_2,i}-t_{P,i}\leq d_{V_2P}/c+t_{\delta}$ for some threshold $t_{\delta}$. If any checks fail, the protocol aborts immediately. 
    \end{enumerate}
    \item \textbf{Score Estimation}: $V_1$ and $V_2$ compute the transmission  $\eta$
    \begin{equation*}
        \eta = \frac{N_{\det}}{N}
    \end{equation*}
    and the normalized ABE score $\cnorm$:
        \begin{equation*}
            \cnorm = \frac{1}{m} \frac{1}{N_{\det}}\sum_{\alpha'} \left(\cos \theta_{\alpha'} E_{Z\alpha'} + \sin \theta_{\alpha'} \cos \varphi_{\alpha'} E_{X\alpha'} + \sin \theta_{\alpha'} \sin \varphi_{\alpha'} E_{Y\alpha'}\right),
        \end{equation*}
        where
        \begin{equation*}
            N_{\det} = \sum_{\alpha, \alpha'} \left(N_{\alpha \alpha',00} + N_{\alpha \alpha',11} + N_{\alpha \alpha',01} + N_{\alpha \alpha',10}\right)
        \end{equation*}
        is the number of detected rounds,
        \begin{equation*}
            E_{\alpha \alpha'} = N_{\alpha \alpha',00} + N_{\alpha \alpha',11} - N_{\alpha \alpha',01} - N_{\alpha \alpha',10},
        \end{equation*}
    and $N_{\alpha\alpha',sz}$ is the number of rounds where the verifier chooses basis $\alpha$ and bit value $s$, and the prover's measurement basis is set to $\alpha'$. The verifiers check if the transmission and the ABE score exceed the thresholds $\eta > \etatol$, and $\cnorm > \ctolnorm $. If both thresholds are exceeded, the prover's location is verified.
\end{enumerate}
\end{protocol}
\end{figure*}

To understand the significance of the ABE score, we consider an equivalent entanglement-based implementation of the protocol.
In the entanglement-based protocol, instead of preparing one of the six quantum states, the verifier prepares the maximally entangled state $\ket{\Phi^+}_{VQ} = \left(\ket{00}_{VQ} + \ket{11}_{VQ} \right)/\sqrt{2}$, and measures the quantum register $V$ in one of the three bases $\{X, Y, Z\}$. Let $\vec{\sigma}= (\sigma_X, \sigma_Y, \sigma_Z)$ denote the usual Pauli operators and let $\sigma_{P_{\alpha'}} := P_0^{\alpha'} - P_1^{\alpha'}$ denote the Hermitian operator corresponding to the prover's $\alpha'$-basis measurement. One can verify that
\begin{equation}
    C_{\alpha'} = \bra{\Phi^+}(\vec{n}_{\alpha'} \cdot \vec{\sigma}) \otimes \sigma_{P_{\alpha'}}\ket{\Phi^+},
\end{equation}
where $\vec{n}_{\alpha'} = (\sin \theta_{\alpha'} \cos \varphi_{\alpha'}, \sin \theta_{\alpha'} \sin \varphi_{\alpha'}, \cos \theta_{\alpha'})$ is a unit vector in the Bloch sphere. Therefore, $C_{\alpha'}$ can be interpreted as the correlation when both the verifier and the prover perform the projective measurement in the direction $\vec{n}_{\alpha'}$ of the Bloch sphere. It is well known that this is related to the error rate in the $\alpha'$-basis, $e_{\alpha'} \eta_{\alpha'} = (\eta_{\alpha'} - C_{\alpha'}) / 2$ if the verifier had measured in this basis (or, equivalently, prepared the corresponding states in the prepare-and-measure protocol). Consequently, if the transmission is basis-independent (which is the case for most implementations), the normalized ABE score $\cnorm$ will be related to the average error rate $e$: $\cnorm = 1 - 2e$.

Our protocol offers three key advantages.
Firstly, it reduces the complexity of experimental implementation by requiring fewer prepared states. 
Secondly, it provides greater flexibility since the preparation and measurement bases need not match. 
This allows the protocol performance to be adapted for different provers (e.g., in the context of QPV over a large quantum network), with provers further from verifier 1 utilizing more bases, without modifying the verifier's quantum device.
Finally, the verifiers no longer have to communicate details of the basis choice (e.g. by exchanging $x$ or $y$ via a secure communication channel) at the start of each round or before the QPV protocol.
This can reduce requirements on the verifiers since an authenticated communication channel is sufficient, eliminating the need for a secure communication channel.

\subsection{Security against Unentangled Adversaries}
In our QPV protocol, the adversaries' goal is to convince the verifiers that they are at location $P$ while not being present there. In other words, Alice and Bob aim to achieve a high transmission and normalized ABE score, i.e., $\eta > \etatol$ and $\cnorm > \ctolnorm$. Since our protocol only modifies the choice of state preparation and measurement but not the overall steps, the general adversary strategy matches that described earlier. Here, we consider how well they can achieve this task, provided that they do not pre-share an entangled state. While this analysis is only valid for a restricted adversarial model, it is still valuable since it allows us to understand the limits of the noise and loss tolerance of the protocol. In later sections, we consider more general adversarial models where the adversaries are allowed to pre-share quantum states of bounded dimensions, which may be entangled.

For simplicity, we assume that the evaluation function $f$ produces the uniform distribution over $[m]$. When adversaries share no entanglement, we consider a special case of that attack where the pre-shared quantum state $\rho_{AB}$ is separable, $\rho_{AB} = \sum_{r} p_{r} \rho_{A}^{r} \otimes \rho_{B}^{r}$ for some states $\{\rho_A^{r}\}_{r}$ and $\{\rho_B^{r}\}_{r}$, where $r$ represents pre-shared randomness.
We note that this attack model is independent of the speed of transmission of quantum system $Q$, so without any loss of generality, we can consider the system $Q$ to be sent at the speed of light. 

Let us consider a stronger adversary, who is provided with $x$ before the protocol round.
This fixes the optimal choice for Alice's local attack $\vars{E}_{QA}$. On the other hand, when Bob receives the input $y$, he can apply his local attack $\vars{E}_y$ to his state, and obtain the state $\sigma_{B}^{r,y} = \vars{E}_B^y[\rho_{B}^r]$. Then, following the adversary model described earlier, Alice and Bob will perform the communication and measurement step of the attack. Evidently, this is the most general attack that Alice and Bob can perform when they do not pre-share any entangled states.

However, notice that since $\rho_{AB}$ is separable, it is possible that Alice prepares the $2^n$ copies of Bob's initial states in her location where each copy is associated to one possible value of $y$. Next for each copy, Alice can simulate Bob's operation and the proper partitioning of the sub-systems. We note that in this strategy, Alice's actions are independent of the actual input $y$ that is sent by the verifier $V_2$. By discarding the irrelevant copies, the adversaries will end up with the same shared state as they have in the original attack we described previously. Therefore, when considering unentangled adversaries, one can simplify the analysis to a scenario where only Alice prepares the initial state, applies the local operations, and distributes the sub-systems. In this scenario, the resulting quantum state $\rho_{VAB}$ will be independent of the input $y$. After that, verifier $V_2$ will broadcast the input $y$ to Alice and Bob and they perform the measurements $\{A_{z}^y\}_z$ and $\{B_z^y\}_z$. We emphasize that this attack is actually equivalent to the original attack where Bob is involved in the preparation of the pre-measured quantum state.

For the purpose of upper bounding the ABE score, we can simplify the analysis further by assuming that the adversaries can pick the optimal attack for each basis choice, i.e. for each basis choice $\alpha'$, pick the $y$ with the strategy $\vars{E}_B^y$, $\{A^y_{z}\}_z$ and $\{B^y_z\}_z$ that provides the highest score and label this strategy as $\vars{E}_B^{\alpha'}$, $\{A^{\alpha'}_{z}\}_z$ and $\{B^{\alpha'}_{z}\}_z$. Therefore, to get an upper bound of the adversaries' ABE score, it is sufficient to consider the reduced scenario where Alice applies the local operation independently of the inputs, sends some sub-system to Bob, and afterwards, the verifier $V_2$ announces the measurement basis $\alpha'$ to both adversaries, who can only perform local measurements to their respective sub-systems. Moreover, after reducing the problem to the simplified scenario, we can consider a larger Hilbert space where the local operations are unitary and the local measurements are projective. In this case, we label $\ket{\phi_s^{\alpha}} = \vars{E}_{QA} \ket{\psi_{s}^{\alpha}}$ as the state shared by Alice and Bob after the local operations, conditioned on the verifier's preparation basis choice and bit value.

We are interested in finding the maximum ABE score achievable by the adversary for a given transmission $\eta$.
If a prover achieves a higher score, we can conclude with high confidence that the prover is at location $P$.
With the constraints placed on the transmission and response mismatch rate from the adversarial model, the maximum achievable score can be upper bounded by an optimization problem
\begin{equation}
\begin{split}
    \max\quad&  \frac{1}{2m} \sum_{\alpha'}\sum_{s,z\in\{0,1\}} (-1)^{s+z} \left[ \cos\theta_{\alpha'} \mel{\phi_{s}^Z}{A^{\alpha'}_{z}}{\phi_{s}^Z} + \sin\theta_{\alpha'}\cos\varphi_{\alpha'} \mel{\phi_{s}^X}{A^{\alpha'}_{z}}{\phi_{s}^X}\right.\\
    &\left.+ \sin\theta_{\alpha'}\sin\varphi_{\alpha'} \mel{\phi_{s}^Y}{A^{\alpha'}_{z}}{\phi_{s}^Y}\right]\\
    \mathrm{s.t.}\quad&\frac{1}{6}\sum_{\alpha,s}\mel{\phi_{s}^{\alpha}}{A^{\alpha'}_{\perp}}{\phi_{s}^{\alpha}}=1-\eta,\quad\forall \alpha'\\
    &\mel{\phi_{s}^{\alpha}}{A^{\alpha'}_z\otimes B^{\alpha'}_{z'}}{\phi_{s}^{\alpha}}=0,\quad\forall z\neq    z',\alpha,\alpha',s\\
    & \braket{\phi_{s}^{\alpha}}{\phi_{\tilde{s}}^{\tilde{\alpha}}}=\braket{\psi^{\alpha}_s}{\psi^{\tilde{\alpha}}_{\tilde{s}}},\quad\forall\alpha,\tilde{\alpha},s,\tilde{s}
\end{split}
\end{equation}
The constraints correspond to (1) loss is $1-\eta$ and independent of $(x,y)$ (equivalently $\alpha'$), (2) classical response cannot have any mismatch (both adversaries cannot give different responses $z\neq z'$), and (3) the overlap between different preparation states is preserved under a fixed map $\vars{E}_{QA}$. We also note that due to the second constraint, we can ignore Bob's response and focus on Alice's response only when computing the score as well as the transmission.

We can use the relaxation introduced in Ref.~\cite{wang2019characterising}, which is based on the Navascu\'es-Pironio-Ac\'in (NPA) hierarchy~\cite{NPA2007,NPA2008}, to relax the optimization into a semidefinite program (SDP).
More concretely, we can construct successively larger sets of projective measurements: we start with the first set of projectors $\Gamma^1 = \{\mathbb{I}\} \cup \{A_{z}^{\alpha'}\}_{z \in \{0,1,\perp\}, \alpha' \in [m]} \cup \{B_{z}^{\alpha'}\}_{z \in \{0,1,\perp\}, \alpha' \in [m]}$, and for any higher level $\ell > 1$, we can inductively define the set $\Gamma^\ell = \{\gamma_i \gamma_j: \gamma_i \in \Gamma^{\ell - 1}, \gamma_j \in \Gamma^{1} \}$. Next, for a chosen level $\ell$, we consider the set of (sub-normalized) vectors $\{\ket{\xi_{rs}} = \gamma_r \ket{\phi_{s}^Z}: \gamma_r \in \Gamma^{\ell}\}_{rs}$ which has a corresponding Gram matrix $G^\ell$ whose entries are the inner-products $\braket{\xi_{rs}}{\xi_{\bar{r}\bar{s}}}$. We note that the Gram matrix $G^\ell$ is positive semidefinite by definition, $G ^\ell \succeq 0$. Note that without any loss of generality, we can consider the measurement operators to be projective measurements using Naimark's dilation theorem. We also note that it is sufficient to work with the $Z$-basis states since we can write the $X$-basis states as $\ket{\phi_{s}^X} = (\ket{\phi_{0}^Z} + (-1)^s \ket{\phi_{1}^Z})/\sqrt{2}$ and the $Y$-basis states as $\ket{\phi_{s}^Y}=(\ket{\phi_{0}^Z} + i (-1)^s \ket{\phi_{1}^Z})/\sqrt{2}$ due to the linearity of the adversaries' unitary operations. As such, for any operator $\Pi$, any expression of the form $\mel{\phi_{s_1}^{\alpha_1}}{\Pi}{\phi_{s_2}^{\alpha_2}}$ can be written as linear combinations of $\left\{\mel{\phi_{s}^Z}{\Pi}{\phi_{\tilde{s}}^Z} \right\}_{s, \tilde{s}}$.
The problem then can be relaxed to
\begin{equation}
\label{eqn:unentangled_adv_SDP}
\begin{split}
    \max\quad&\frac{1}{2m}\sum_{s,z\in\{0,1\},\alpha'}(-1)^{s+z}\left[\cos\theta_{\alpha'} \mel{\phi_{s}^Z}{A^{\alpha'}_{z}}{\phi_{s}^Z} + \sin\theta_{\alpha'}\cos\varphi_{\alpha'} \mel{\phi_{s}^X}{A^{\alpha'}_{z}}{\phi_{s}^X}\right.\\
    &\left.+ \sin\theta_{\alpha'}\sin\varphi_{\alpha'} \mel{\phi_{s}^Y}{A^{\alpha'}_{z}}{\phi_{s}^Y}\right]\\
    \mathrm{s.t.}\quad&\frac{1}{2}\sum_{s}\mel{\phi_{s}^{Z}}{A^{\alpha'}_{\perp}}{\phi_{s}^{Z}}=1-\eta,\quad\forall \alpha'\\
    &\mel{\phi_{s}^Z}{A^{\alpha'}_z B^{\alpha'}_{z'} }{\phi_{\tilde{s}}^Z}=0,\quad\forall z\neq z', \alpha', s, \tilde{s}\\
    & \braket{\phi_{s}^{Z}}{\phi_{\tilde{s}}^{Z}} = \delta_{s\tilde{s}}, \quad \forall s,\tilde{s}\\
    & [A_z^{\alpha'},B_{\tilde{z}}^{\tilde{\alpha'}}]=0,\quad\forall \alpha',\tilde{\alpha}',z,\tilde{z}\\
    &A_z^{\alpha'}A_{\tilde{z}}^{\alpha'}=\delta_{z\tilde{z}} A_{z}^{\alpha'},\quad B_z^{\alpha'}B_{\tilde{z}}^{\alpha'}=\delta_{z\tilde{z}} B_{z}^{\alpha'},\quad\sum_zA_z^{\alpha'}=\mathbb{I},\quad\sum_zB_z^{\alpha'}=\mathbb{I},\quad\forall \alpha'\\
    &G^\ell \geq 0,
\end{split}
\end{equation}
where the additional set of constraints include the commutativity of $A$ and $B$ operators and their projective nature.
Since the problem can be expressed as elements of the positive semi-definite matrix $G$, the problem is an SDP and can be solved numerically.
Since the SDP is a relaxation of the original optimization problem, the solution is a valid upper bound to the maximum score that an adversary can achieve.

We solve the SDP numerically using ncpol2sdpa~\cite{Ncpol2sdpa} with the picos interface~\cite{cvxpy} and qics solver~\cite{qics} at NPA level 1+AB (using projectors $\Gamma^1\cup\{A_z^{\alpha'}B_{\tilde{z}}^{\tilde{\alpha}'}\}_{\alpha'\tilde{\alpha}'z\tilde{z}}$). To compare with previous error rate-based analysis, we convert the maximum achievable score of the adversaries to a minimum error rate (recall $C=\eta(1-2e)$, which is achievable for an honest party with average transmission of $\eta$ and QBER $e$).
Figure~3 shows the numerical results, for three and four measurement basis choices, for both the simple and original multi-basis QPV, and comparing our SDP in Eq.~\eqref{eqn:unentangled_adv_SDP} and that of Ref.~\cite{Llorenc2023}.
As argued earlier, the simple and original multi-basis QPV rely on equivalent sets of statistics to demonstrate security, and thus as expected, their performance matches (same solid line).

For different numbers of measurement bases, the behaviors are similar: the adversaries' minimum error rate decreases with decreasing transmission $\eta$, and depending on the number of prover's measurement bases $m$, the minimum error rate goes to zero when the transmission is approximately $1/m$. We note that this behavior is the same as the one expected from the case where $m$ preparation and measurement bases are used and the error rate is monitored directly. This gives numerical evidence that the ABE score (or its normalized version) is a good substitute to the error rate, to monitor the security of the QPV protocol.

Figure~3 shows that our SDP formulation improves the protocol's performance compared to the method used in Ref.~\cite{Llorenc2023}. This is because our SDP uses the characterization of the state preparation explicitly, whereas Ref.~\cite{Llorenc2023} worked with a relaxation of this constraint (see \textbf{Proposition 4.2}). 

\begin{figure}
    \centering
    \includegraphics[width=0.8\linewidth]{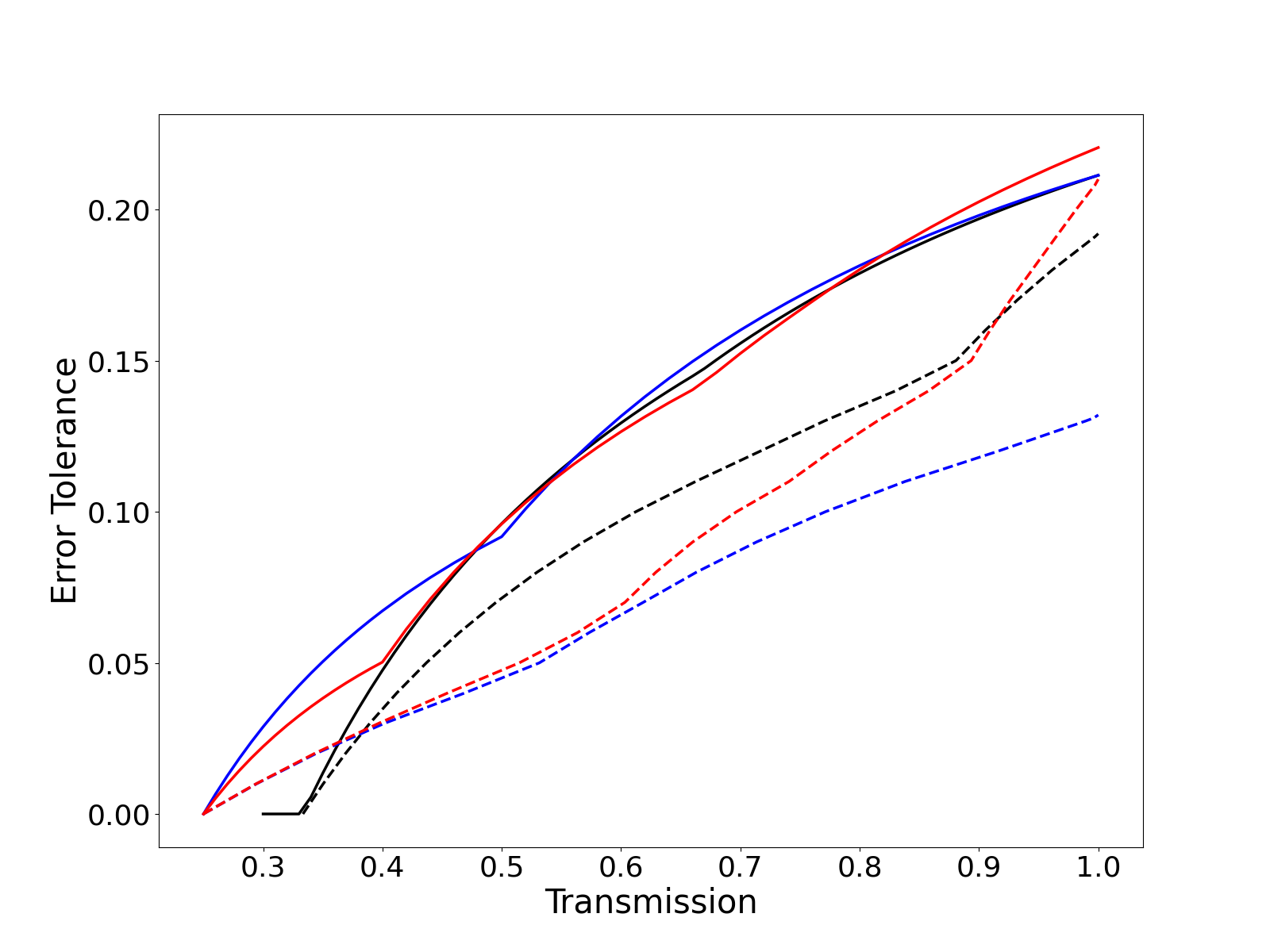}
    \caption{Plot of error tolerance at various transmission $\eta$ for secure position verification against unentangled adversaries. The solid lines represent the simple multi-basis QPV analyzed using the SDP in Eq.~\eqref{eqn:unentangled_adv_SDP} at level 1+AB, and it matches exactly the original multi-basis QPV in Ref.~\cite{Llorenc2023}, using an analogous SDP. This is an improvement over the original multi-basis QPV analyzed using their SDP in Ref.~\cite{Llorenc2023} at level 2, represented by the dotted lines. The results are shown for the 3 basis case (basis in $X$, $Y$ and $Z$) [black lines], 4 basis case with basis in a tetrahedron arrangement ($(\theta_{\alpha'},\varphi_{\alpha'})\in\{(0,0)\}\cup\{(\cos^{-1}(-1/3),\frac{2k\pi}{3})\}_{k=1,2,3}$) [blue lines] and 4 basis case with $(4,2,3)$ arrangement~\cite{Llorenc2023} ($(\theta_{\alpha'},\varphi_{\alpha'})\in\{(0,0)\}\cup\{(\frac{\pi}{2},\frac{2k\pi}{3})\}_{k=1,2,3}$) [red lines].}
    \label{fig:Multi_Basis_QPV_Plot}
\end{figure}

\subsection{Security against Entangled Adversaries}
In the previous section, we presented the security analysis against unentangled adversaries, and demonstrated improvements in proof techniques that led to significant increases in error tolerance. Our technique also enables the use of the ABE score to monitor the security of the QPV protocol. We now extend this analysis to adversaries that may pre-share entangled quantum states.

More precisely, referring to the general attack model outlined earlier, we assume that the pre-shared state $\rho_{AB} = \ketbra{\psi}{\psi}_{AB}$ is pure and the dimension of each sub-system is bounded by $2^q$ for some fixed $q \geq 1$. Moreover, we assume that the local operations of each party are restricted to unitary operations, denoted as $\{U_{QA}^x\}_x$ and $\{U_B^y\}_y$. Since the quantum state sent by the verifier $V_1$ is a single-qubit, the dimension of the unitary $U_{QA}^x$ is $2^{q+1}$, while the dimension of the unitary $U_B^y$ is $2^q$. We call this strategy a $q$\textit{-qubit restricted strategy} to distinguish from a more general attack strategy introduced in a later section. We also remark that the $q$-qubit restricted strategy is essentially the same attack as the $q$-qubit pure strategy considered in Ref.~\cite{Llorenc2023}, except that the analysis in that work can only be applied to protocols that are based on error rate monitoring.

\begin{figure}
    \centering
    \includegraphics[width=\linewidth]{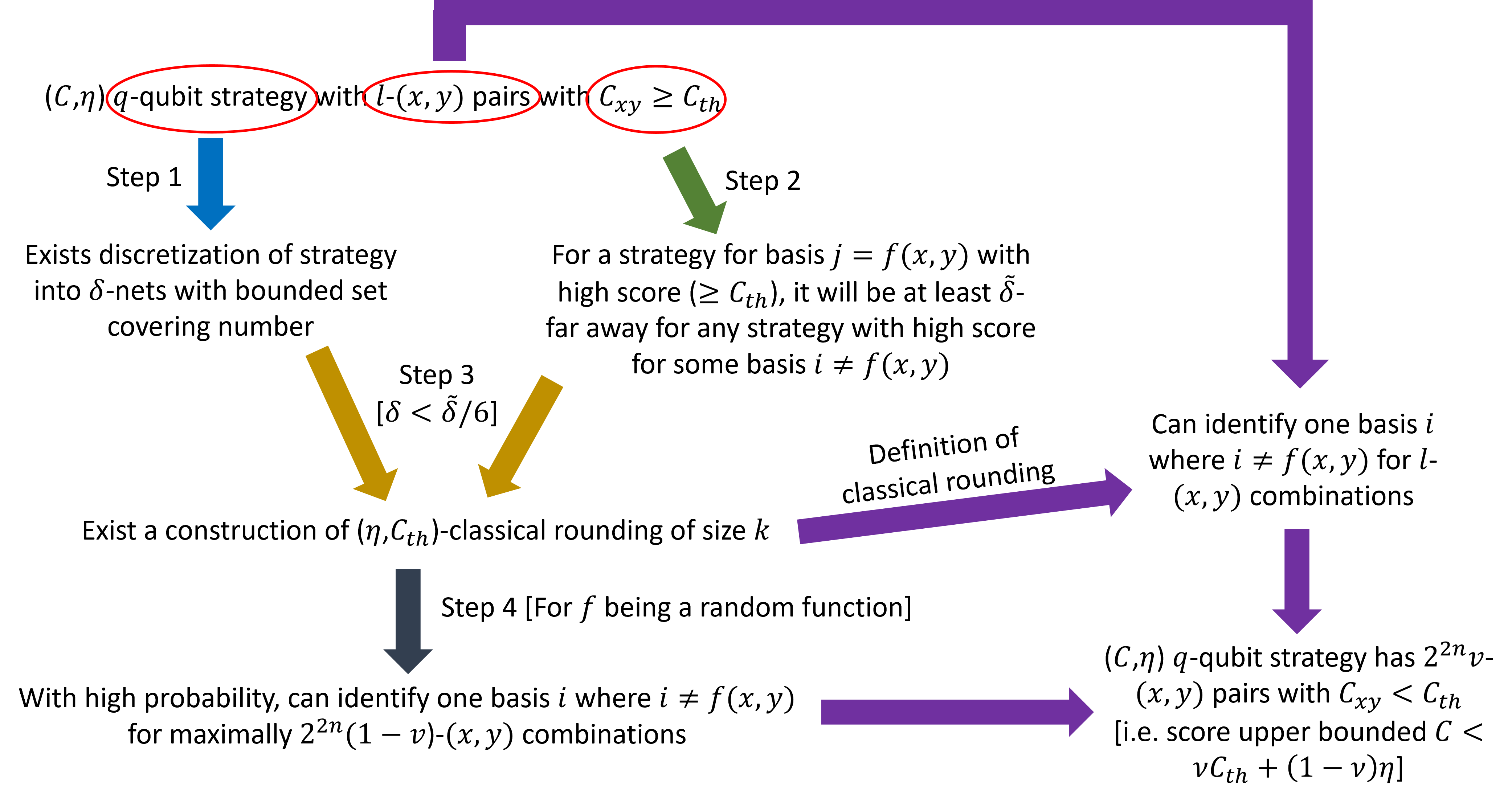}
    \caption{Summary of the five main proof steps in QPV security analysis against entangled adversaries.}
    \label{fig:QPV_Proof_Chain}
\end{figure}

To prove security, we closely follow the security proof in Ref.~\cite{Llorenc2023} but with three key modifications: (1) improvements to the classical rounding definition, (2) tightening of the trace distance bound, and (3) adaptation to use score instead of error rate.
The proof can be separated into five steps, summarized in Figure~4.
The first step is to characterize $q$-qubit strategies using parameters monitored in the protocol, namely the ABE score $C$ and the transmission $\eta$. This allows us to discretize the set of strategies into a $\delta$-net, where every strategy is $\delta$-close to the center of a net in some distance measure.
Without loss of generality, we can consider an equivalent entanglement-based protocol where the first verifier prepares the Bell state $\ket{\Phi^+}_{VQ}$ and performs a measurement on subsystem $V$ later, i.e. $\ket{\psi^{xy}_{\alpha s}}_{A'B'}$ is formed from the post-measurement state of $\ket{\Psi^{xy}}=(\mathbb{I}_V \otimes U_{QA}^x \otimes U_B^y)(\ket{\Phi^+}_{VQ} \otimes \ket{\psi}_{AB})$ for measurement operator $\Pi^{\alpha}_s$ (note for $Y$ basis, this includes a bit flip to have the correct correlation).
The following Hermitian operator is very useful when calculating the ABE score
\begin{equation*}
    \Lambda^{\alpha'}_s=\cos\theta_{\alpha'}\Pi^{Z}_s+\sin\theta_{\alpha'}\cos\varphi_{\alpha'}\Pi^{X}_s+\sin\theta_{\alpha'}\sin\varphi_{\alpha'}\Pi^{Y}_s.
\end{equation*}

We characterize a $q$-qubit restricted strategy using two parameters, namely the ABE score $C$ and the transmission $\eta$. Thus, we define a restricted $q$-qubit $(C,\eta)$-strategy as
\begin{definition}[Restricted $q$-qubit $(C,\eta)$-strategy]
    For any restricted $q$-qubit strategy, the input-dependent score and transmission are defined as
    \begin{gather*}
        C_{xy}= \sum_{s,z\in\{0,1\}}(-1)^{s+z} \mel{\Psi^{xy}}{\Lambda^{f(x,y)}_s\otimes A^{xy}_{z}\otimes B_{z}^{xy}}{\Psi^{xy}}\\
        p_{t|xy}=1-\mel{\Psi^{xy}}{\mathbb{I}_V\otimes A^{xy}_{\perp}\otimes B_{\perp}^{xy}}{\Psi^{xy}}
    \end{gather*}
    respectively, and the matching condition is defined as
    \begin{equation*}
        \mel{\Psi^{xy}}{\mathbb{I}_V\otimes A^{xy}_{z}\otimes B_{z'}^{xy}}{\Psi^{xy}}=0,\,\forall z\neq z'.
    \end{equation*}
    A $q$-qubit restricted strategy is a $(C,\eta)$-strategy if it satisfies $2^{-2n}\sum_{xy}C_{xy}=C$, $p_{t|xy}=\eta$ and the matching condition for all $(x,y)$. Furthermore, the strategy has a $(\cth,l)$-partition if there are at least $l$ pairs of $(x,y)$ satisfying $C_{xy}\geq \cth$.
\end{definition}
Using set covering arguments~\cite{Vershynin2012}, we construct $\delta$-nets for the pure states and unitaries of Alice and Bob respectively (see Theorems~\ref{thm:covering_number_state} and \ref{thm:covering_number_unitary}), with the number of nets $\log_2\abs{\vars{N}_S}\leq 2^{2q+1}\log_2(1+\frac{2}{\delta})$ corresponding to the pure state with dimension $d=2^{2q}$, $\log_2\abs{\vars{N}_U}\leq 2^{2q+3}\log_2(1+\frac{2}{\delta})$ corresponding to Alice's unitary of dimension $d=2^{q+1}$ and $\log_2\abs{\vars{N}_U}\leq 2^{2q+1}\log_2(1+\frac{2}{\delta})$ corresponding to Bob's unitary of dimension $d=2^q$.
We note that the definition of threshold values $\cth$ is simply to aid in the security analysis and is not tied to the strategy.
Their choices partition a strategy's attacks into groups, similar to the $(\varepsilon,l)$-perfect strategy definition in Ref.~\cite{Llorenc2023}, where the $(x,y)$-pairs are partitioned into two groups based on the error rate.
Since their choices do not impact security in practice, we are free to choose the threshold $\cth$.

The second step considers the same or an optimal set of strategies satisfying some threshold score $\cth$ for fixed $(x,y)$, and shows that strategies corresponding to different measurement basis $\alpha'=f(x,y)$ must be well-separated, with trace distance at least $\tilde{\delta}$.
For the ABE score, we define the optimal set of strategies for fixed $(x,y)$ as
\begin{definition} \label{def: C-eta set}
    Let $\tilde{C}\in[-1,1]$ and $\tilde{\eta}\in[0,1]$, and define the $\alpha'$-contribution to the score, transmission and abort probability for a given choice of measurement operators $\{A_z\}_z$ and $\{B_z\}_z$ as
    \begin{gather*}
        C_{\alpha'}=\sum_{s,z\in\{0,1\}}(-1)^{s+z}\mel{\Psi}{\Lambda^{\alpha'}_s\otimes A_{z}\otimes B_{z}}{\Psi}\\
        p_t=1-\mel{\Psi}{\mathbb{I}_V\otimes A_{\perp}\otimes B_{\perp}}{\Psi}\\
        p_a=\sum_{z\neq z'}\mel{\Psi}{\mathbb{I}_V\otimes A_z\otimes B_{z'}}{\Psi}
    \end{gather*}
    The set of output states for a partition for a basis $\alpha'$ is defined as
    \begin{gather*}
        \vars{S}_{{\alpha'}}^{\tilde{C},\tilde{\eta}}=\{\ket{\Psi}_{V A'B'}=(\mathbb{I}_V \otimes U_{QA}\otimes U_B)(\ket{\Phi^+}_{VQ} \otimes \ket{\psi}_{AB}):\exists \{A_z\}_z,\{B_z\}_z,\, s.t.,\,C_{\alpha'}\geq\tilde{C},\,p_t= \tilde{\eta},p_a=0\},
    \end{gather*}
    where $U_{QA}$, $U_B$ and $\ket{\psi}_{AB}$ are restricted to be of a $q$-qubit restricted strategy.
\end{definition}
We seek to lower bound the trace distance between strategies corresponding to different basis $\alpha'=i,j$ ($i\neq j$), $\Delta(\ket{\Psi_i},\ket{\Psi_j})\geq\tilde{\delta}$, for $\ket{\Psi_i}\in\vars{S}_i^{\cth,\eta}$ and $\ket{\Psi_j}\in\vars{S}_j^{\cth,\eta}$.
Previous approaches lower bound $\tilde{\delta}$ by either a reduction to a simpler QPV game~\cite{Llorenc2023} or using Fano's inequality, complementary-information trade-off and Alicki-Fannes-Winter continuity bound~\cite{Bluhm2022}.

Here, we obtain a tightened trace distance bound by directly minimizing the trace distance over all combinations of states from the two sets.
We first perform a piecewise linear approximation of the trace distance, $\Delta(\ket{\Psi_i},\ket{\Psi_j})=\sqrt{1-\abs{\braket{\Psi_i}{\Psi_j}}^2}$, with respect to the inner product $\abs{\braket{\Psi_i}{\Psi_j}}$ that lower bounds the trace distance value for $\ket{\Psi_i}\in \vars{S}_i^{\cth,\eta}$.
Since the global phase of the state impacts neither the trace distance nor statistics, WLOG we take $\braket{\Psi_i}{\Psi_j}$ to be real and positive.
Minimizing a linear function of $\braket{\Psi_i}{\Psi_j}$ subject to the constraints on the expectation values, we can use the relaxation in Ref.~\cite{wang2019characterising} based on the NPA hierarchy to relax the optimization into an SDP.
The new trace distance result can be summarized in the corollary below, and details can be found in the Methods section.
\begin{corollary}\label{cor: existence trace distance bound}
    Let $j \in [m]$. For any state $\ket{\Psi_j} \in \vars{S}_j^{\cth, \eta}$, there exists $i \in [m]$ such that $i \neq j$ and for any state $\ket{\Psi_i} \in \vars{S}_i^{\cth, \eta}$, we have $\Delta(\ket{\Psi_i}, \ket{\Psi_j})\geq \tilde{\delta}$, where $\tilde{\delta}$ is the solution to the optimization problem given by Eq.~\eqref{eq: sdp six-state}.
\end{corollary}
We note that while the trace distance lower bound here applies to sets $\vars{S}_{\alpha'}^{\cth,\eta}$ with no dimension restrictions, it remains valid when considering only $q$-qubit restricted strategies since these form a subset of $\vars{S}_{\alpha'}^{\cth,\eta}$.

The third step involves demonstrating that by choosing a small enough $\delta$-net, we can assign each net to a strategy.
This allows the construction of a classical rounding strategy, where messages $x$ and $y$ can be compressed and still recover $f(x,y)$, which we formally define as
\begin{definition}[Classical Rounding, modified from Ref.~\cite{Bluhm2022,Llorenc2023}]
    A function $g:\{0,1\}^{3k}\rightarrow\{\vars{S}:\vars{S}\subseteq[m],\abs{\vars{S}}=m-1\}$ is termed a $(\eta,\cth)$-classical rounding of size $k$ if for any function choice $f:\{0,1\}^{2n}\rightarrow[m]$, any $l\in[2^{2n}]$, any $(C,\eta)$-strategy with $(\cth,l)$-partition, there exist functions $f_A:\{0,1\}^n\rightarrow\{0,1\}^k$, $f_B:\{0,1\}^n\rightarrow\{0,1\}^k$ and $\lambda\in\{0,1\}^k$ such that $f(x,y)\in g(f_A(x),f_B(y),\lambda)$ for $l$ pairs of $(x,y)$.
\end{definition}
Here, we make a slight tweak to the classical rounding definition by requiring $\abs{\vars{S}}=m-1$, which tightens the analysis.
Based on this definition, we can show that one can construct a classical rounding by choosing appropriate $\cth$ and $\eta$ values such that Corollary~\ref{cor: existence trace distance bound} (separation of $q$-qubit restricted strategies) is valid,
\begin{theorem}
\label{thm:rounding_from_delta_sec}
    Suppose $\cth$ and $\eta$ are such that $\tilde{\delta}(\cth,\eta)>0$. Then, there exists a $(\eta,\cth)$-classical rounding of size $k=2^{2q+3}\log_2\Big\lceil2+\frac{12}{\tilde{\delta}}\Big\rceil$.
\end{theorem}
\begin{proof}
    See Methods.
\end{proof}

The fourth step is to argue that for $q$-qubit strategies, if there exists a classical rounding, when a random function $f$ is used, then except with small probability, there is a maximum number of $(x,y)$ pairs that can be correctly classified by the classical rounding. 
We summarize this step in a theorem,
\begin{theorem}
\label{thm:nu_value_compute}
    Fix a $(\eta,\cth)$-classical rounding with $k=2^{2q+3}\log_2\Big\lceil2+\frac{12}{\tilde{\delta}}\Big\rceil$ and let $q\leq\frac{1}{2}n-q_0$. Then, a uniform random function $f:\{0,1\}^{2n}\rightarrow[m]$ fulfills the following with probability at least $1-2^{-\beta}$: For any $f_A$, $f_B$ and $\lambda$, $f(x,y)\in g(f_A(x),f_B(y),\lambda)$ holds for less than $2^{2n}(1-\nu)$ pairs of $(x,y)$, for 
    \begin{equation*}
        \nu= h_b^{-1}\left\{\log_2\left(\frac{m}{m-1}\right)-2^{5-2q_0}\log_2\Bigg\lceil2+\frac{12}{\tilde{\delta}}\Bigg\rceil-\frac{\beta}{2^{2n}}\right\},
    \end{equation*}
    where $h_b(x)$ is the binary entropy function.
\end{theorem}
\begin{proof}
See Methods.
\end{proof}

Finally, by the definition of classical rounding, a strategy with $(C_{th},l)$-partition has $l$ pairs of $(x,y)$ that can be correctly classified.
Coupled with Theorem~\ref{thm:nu_value_compute}, this implies a minimum number of pairs of $(x,y)$ where the score falls below the score threshold.
We can therefore conclude the proof via contradiction that if the score exceeds a given threshold and the memory size is bounded, no such $q$-qubit restricted strategy can exist and thus with high probability, the response must be from a party at $P$. 
Formally, we are able to demonstrate that any $q$-qubit restricted strategy would result in an upper bound on the score $C$ in a $(C,\eta)$-strategy.
\begin{theorem}
    Let $f$ be a random function, and let Alice and Bob be restricted to $q$-qubit restricted strategies with $q\leq\frac{1}{2}n-q_0$. For any $(C,\eta)$-strategy with $(\cth,\eta)$-partition with $\cth$ chosen such that $\tilde{\delta}(\cth,\eta)>0$, with probability $1-2^{-\beta}$, the score is upper bounded by $C\leq (1-\nu)\eta+\nu \cth$.
\end{theorem}
\begin{proof}
From Theorem~\ref{thm:rounding_from_delta_sec}, an appropriate choice of $\cth$ allows the formulation of a $(\eta,\cth)$-classical rounding $g$.
From Theorem~\ref{thm:nu_value_compute}, this classical rounding, coupled with a bound on $q$, implies that with probability of at least $1-2^{-\beta}$, $f(x,y)\in g(f_A(x),f_B(y),\lambda)$ holds for less than $2^{2n}(1-\nu)$ pairs of $(x,y)$.
Suppose there exists a $(\cth,l)$-partition for the $(C,\eta)$-strategy.
Using the same classical rounding, we can find $f_A$, $f_B$ and $\lambda$ such that $f(x,y)\in g(f_A(x),f_B(y),\lambda)$ holds for $l$ pairs of $(x,y)$.
If $l>2^{2n}(1-\nu)$, we have a contradiction.
Therefore, a $(C,\eta)$-strategy can have a $(\cth,l)$-partition with at most $l\leq 2^{2n}(1-\nu)$.
Noting that the ABE score is upper bounded by $\eta$, we can upper bound the average score,
\begin{equation}
\label{eq:Entangled_CUB_form}
    C=2^{-2n}\sum_{xy}C_{xy}\leq2^{-2n}\left(\sum_{xy:C_{xy}< \cth}\cth +\sum_{xy:C_{xy}\geq \cth}\eta
    \right)\leq\nu \cth +(1-\nu)\eta.
\end{equation}
\end{proof}
Since we are free to choose $\cth$, by minimizing $C$ over the choice of $\cth$, we can compute an upper bound $C^{UB}$ for any $q$-qubit restricted strategy with transmission $\eta$.
Therefore, if a prover is able to achieve a score $C>C^{UB}$ in the asymptotic regime (no statistical fluctuations), we can be confident (except with probability $\leq 2^{-\beta}$) that there must be a party present at $P$ participating in the QPV protocol.

The main improvements made to the analysis are: (1) generalizing to a score-based analysis, (2) tightening the trace distance lower bound, and (3) tightening Theorem~\ref{thm:nu_value_compute} via a modified classical rounding definition.
The first improvement is a generalization which does not result in performance improvements, while the latter two improve the security via the value of $\nu$.
In the improved analysis,
\begin{equation}
    \log_2\left(\frac{m}{m-1}\right)-h_b(\nu)-2^{5-2q_0}\log_2\bigg\lceil2+\frac{12}{\tilde{\delta}}\bigg\rceil=\frac{\beta}{2^{2n}}
\end{equation}
determines $\nu$.
Compared to Ref.~\cite{Llorenc2023}, we have a tighter (larger) trace distance bound $\tilde{\delta}$ and the absence of a $-\nu\log_2(m-1)$ on the LHS of the equation.
Both improvements lead to a higher $\nu$ for fixed $q_0$ and $\frac{\alpha}{2^{2n}}$, which in turn leads to a lower upper bound on the score (or equivalently a higher upper bound on the error).

To illustrate the improvement, we compare our analysis to that of Ref.~\cite{Llorenc2023}. 
For the latter, we use an SDP analogous to Eq.~\ref{eqn:unentangled_adv_SDP} for error rate when computing the winning probability $w^{\xi}(\eta)$.
For simplicity, we select $\frac{\beta}{2^{2n}}=10^{-10}$ and optimize the score upper bound $C^{UB}$ over choices of $\cth$ at each $\eta$ value.
We convert the maximum score to the minimum error rate using $C=\eta(1-2e)$ for comparison with error rate-based analysis.
Figure~5 shows that for fixed $q_0$, our improvements increase the error tolerance, thereby relaxing the experimental requirements.
The choice of $q_0$ and thus the size of the adversary's quantum systems can also heavily influence the performance, though this can be compensated by increasing the size of the classical messages $x$ and $y$ sent.

\begin{figure}
    \centering
    \includegraphics[width=0.8\linewidth]{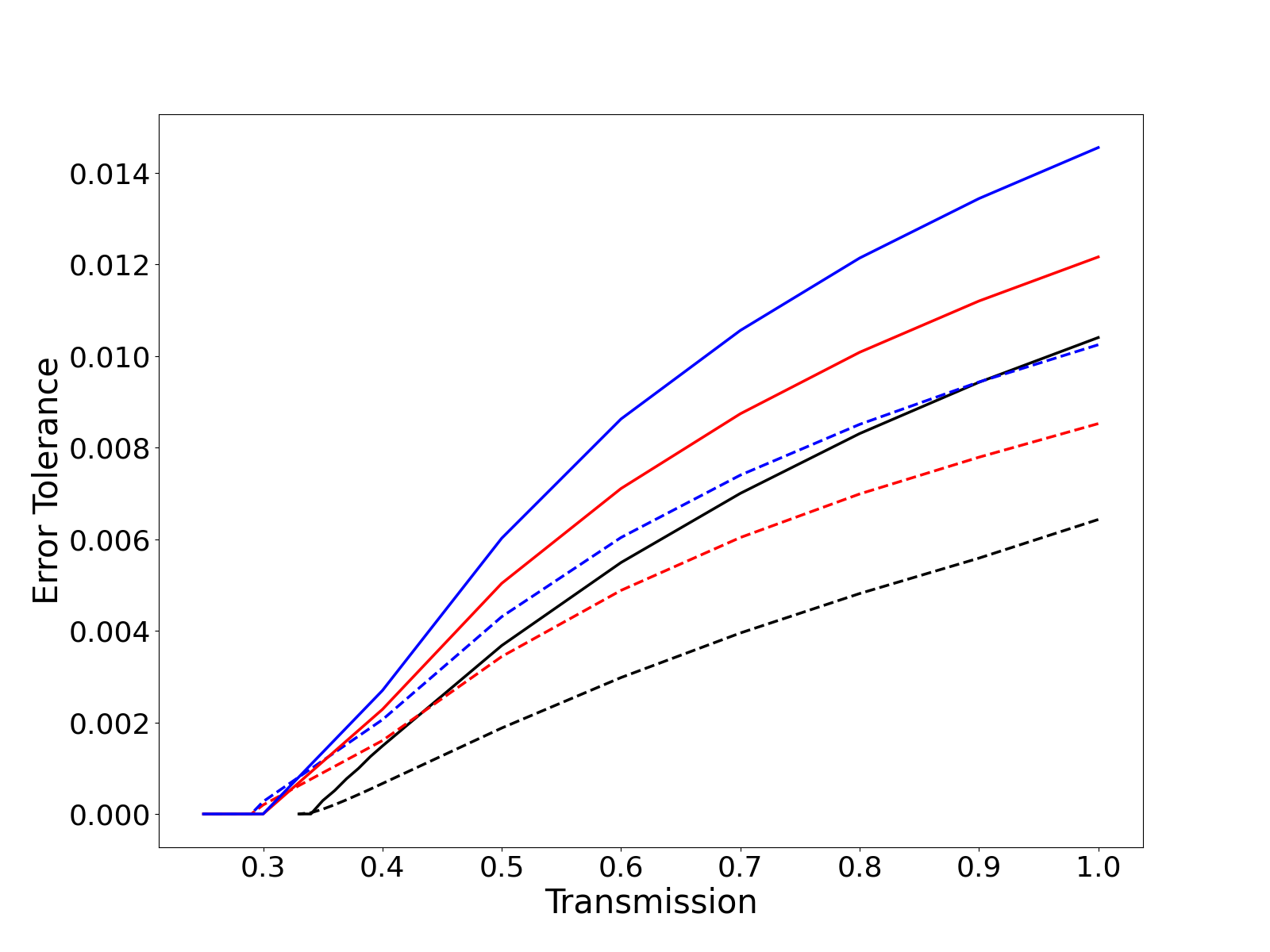}
    \caption{Plot of error tolerance at various transmission for secure position verification against an entangled adversary. The simulation is performed for 3 basis (6 states) [black lines] and for 4 basis arranged in a tetrahedron for $q_0=5$ [red lines] and $q_0=5.5$ [blue lines]. The results of the improved analysis (solid lines) are compared to the analysis in Ref.~\cite{Llorenc2023}, using a modified SDP (dotted lines).}
    \label{fig:entangled_adv_plot}
\end{figure}

We can study fundamental limits of the security by studying the asymptotic behavior, where we choose $q_0=\frac{n}{3}$ and let $n\rightarrow\infty$.
In this scenario, if the trace distance $\tilde{\delta}>0$, $\nu$ depends solely on $h_b(\nu)=\log_2(\frac{m}{m-1})$.
From Eq.~\eqref{eq:Entangled_CUB_form}, the error rate required for a secure QPV against entangled adversaries can be expressed as a factor $\nu$ of that against unentangled adversaries, i.e. $e_{ent}\leq \nu e_{unent}$.
This implies that there is a fundamental upper limit on the error rate given by $h_b^{-1}(\log_2(\frac{m}{m-1}))$, which decreases with $m$. This improves upon Ref.~\cite{Llorenc2023}, which has an additional $\nu\log_2(m-1)$ penalty in the asymptotic limit.
Since $\nu\leq\frac{1}{2}$, this method of security analysis yields an error rate that is at most half that against unentangled adversaries.

\subsection{Generalization of Security against Entangled Adversaries}
While the $q$-qubit restricted strategy accounts for adversaries that pre-share entangled states, there are still some assumptions that may not hold in general. To cover the most general attack for adversaries that are only limited by the Hilbert space dimension of their pre-shared quantum states, we need to allow the adversaries to: (1) Use shared randomness $R$ with an unbounded dimension, (2) Pre-share mixed states $\rho_{AB}$ instead of pure states, (3) Use transmission that depends on the inputs $(x,y)$ and the shared randomness $r$, (4) Apply general quantum channels (i.e., CPTP maps) $\channel_{AQ}^{xr}$ and $\channel_B^{yr}$ instead of assuming that the local operations are unitaries.

In this section, we present a framework that fully addresses the first three points, and partially addresses the fourth. More specifically, we consider a class of CPTP maps that takes a quantum system of dimension $2^{q+1}$ and outputs a classical-quantum system whose combined dimension is at most $2^{q+1}$ for Alice (and $2^q$ for Bob). In contrast to $q$-qubit restricted strategies, we refer to this class of strategies as \textit{$q$-qubit general strategies}. In the most general case, $q$-qubit entangled adversaries should be able to perform local operations that output classical registers of arbitrary dimensions, and the dimension bound should only be applied to the quantum registers. We leave this final generalization for future work.

Generalizing the adversarial model from the $q$-qubit restricted strategy is important because the restriction to restricted strategy is easily violated -- it is trivial to perform the more general strategy instead of the restricted strategy. Moreover, it might be advantageous for the adversary to perform the more general strategy, especially using input-dependent transmission. In such cases, restricted strategies may not be optimal. Furthermore, it is impossible to rule out such attacks by monitoring them in the protocol since the transmission can \textit{appear} to be input-independent when only conditioned on the inputs $(x,y)$, but can be input-dependent when also conditioned on the shared randomness $r$. Therefore, it is important to account for the input-dependence of the transmission $\eta$ carefully.

We first address the use of mixed states and general quantum channels.
It is not straightforward to perform the security analysis directly on CPTP maps and mixed states, so the idea would be to partially purify the final state $\rho_{RVA'B'}$ (note we do not apply the dimension bound on the combined classical and quantum output systems).
Let us first define the set of quantum states and channels we are examining.
The set $\vars{S}_q$ is defined as the set of (mixed) quantum states of dimension $2^q$, while $\vars{S}_q^p$ is defined as the set of pure quantum states with dimension $2^q$.
The set $\vars{C}_q$ is defined as the set of CPTP maps that maps $2^q$-dimension quantum states to $2^q$-dimension quantum states, while $\vars{C}_q^U$ is defined similarly for the set of unitaries.

Partial purification can be performed with purification of the mixed state $\rho_{AB}^r$ by doubling its number of qubits~\cite{Wilde2017}, while lifting the quantum channel to a higher dimensional unitary can be performed via the Stinespring dilation theorem~\cite{Wilde2017}. 
We can summarize the purification as a theorem.
\begin{theorem}[Purified strategy]
\label{thm:purification}
    Any state $\rho_{R A'B'V}^{xy}=\sum_rp_r\dyad{r}_R\otimes\channel_{AQ}^{xr}(\dyad{\Phi^+}_{VQ}\otimes\channel_B^{yr}(\rho_{AB}^r))$ with $\rho_{AB}^r\in\vars{S}_{2q}$, $\channel_{AQ}^{xr}\in\vars{C}_{q+1}$ and $\channel_B^{yr}\in\vars{C}_q$ can be purified with purification systems $P$ of dimension $2^{2q}$, $P_A$ of dimension $2^{2(q+1)}$ and $P_B$ of dimension $2^{2q}$, i.e. there exists a state $\ket{\psi^r}_{ABP}\in\vars{S}_{4q}^p$ and unitaries $U_{AQP_A}^{xr}\in\vars{C}_{3(q+1)}^U$ and $U_{BP_B}^{yr}\in\vars{C}_{3q}^U$ such that
    \begin{equation*}
    \begin{gathered}
        \rho_{RA'B'VPP_AP_B}=\sum_rp_r\dyad{r}_R\otimes[(U_{AQP_A}^{xr}\otimes U_{BP_B}^{yr})        (\dyad{\Phi^+}_{VQ}\otimes\dyad{\psi^r}_{ABP}\otimes\dyad{0}_{P_AP_B})(U_{AQP_A}^{xr\dagger}\otimes U_{BP_B}^{yr\dagger})]
    \end{gathered}
    \end{equation*}
    and $\Tr_{PP_AP_B}[\rho_{RA'B'VPP_AP_B}]=\rho_{RA'B'V}$. Furthermore, the maximum score of the original strategy is upper bounded by the maximum score of the purified strategy.
\end{theorem}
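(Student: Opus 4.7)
The plan is to invoke two standard constructions from quantum information---purification of mixed states and Stinespring dilation of CPTP maps---and then verify that the resulting unitary strategy reproduces the original outcome distribution exactly, so the winning probability coincides (and in particular upper bounds the original). The replacement of BB84 state preparation by the Bell pair $\ket{\Phi^+}_{VQ}$ with a delayed measurement on $V$ is already noted in the surrounding discussion, so the real work lies entirely on the adversary side.

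First, for each value $r$ of the shared randomness, take a spectral decomposition of $\sigma_{AB}^r\in\vars{S}_{2q}$ and purify it to a pure state $\ket{\psi^r}_{ABP}\in\vars{S}_{4q}^p$ on an ancillary system $P$ of dimension at most $2^{2q}$ (the dimension of $AB$), so that $\Tr_P[\dyad{\psi^r}_{ABP}]=\sigma_{AB}^r$. Crucially, $P$ is purely an accounting register---neither the $U_{AQP_A}^{xr}$ nor the $U_{BP_B}^{yr}$ acts on it---so attaching $P$ does not give either adversary any extra operational capability. Next, apply Stinespring dilation to each channel: $\channel_{AQ}^{xr}$ acts on a $2^{q+1}$-dimensional space and can be realized by a unitary $U_{AQP_A}^{xr}$ on $AQP_A$ with $P_A$ initialized to $\ket{0}$ and later traced out; by the standard dilation bound one may take $\dim P_A\le 2^{2(q+1)}$, so that $U_{AQP_A}^{xr}\in\vars{C}_{3(q+1)}^U$. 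Likewise $\channel_B^{yr}$ dilates to $U_{BP_B}^{yr}\in\vars{C}_{3q}^U$ with $\dim P_B\le 2^{2q}$. These are exactly the dimensions asserted in the statement.

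The required identity then reduces to a routine consistency check: (i) tracing out $P$ from $\dyad{\psi^r}_{ABP}$ recovers $\sigma_{AB}^r$; (ii) tracing out $P_A$ (resp.\ $P_B$) after applying $U_{AQP_A}^{xr}$ (resp.\ $U_{BP_B}^{yr}$) on the dilation ancilla in state $\ket{0}$ recovers the channel action $\channel_{AQ}^{xr}$ (resp.\ $\channel_B^{yr}$), by definition of the Stinespring dilation; and (iii) the partial trace over $PP_AP_B$ commutes with the redistribution map $\vars{M}$ and with the subsequent POVMs on $A'B'V$, because $\vars{M}$ and those POVMs act only on the complementary registers. Composing these facts shows that the reduced state of the purified construction on $RA'B'V$ equals $\rho_{RA'B'V}^{xy}$, and hence the joint distribution of $(z,z_A,z_B)$ is unchanged. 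Since the purified strategy is itself an admissible attack in the generalized model (pure initial state, unitary local channels on enlarged registers), its winning probability therefore equals---and in particular upper bounds---the original.

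I do not expect a genuine obstacle here: the proof is essentially bookkeeping built on two textbook primitives. The only place where a small error could creep in is dimension accounting (ensuring that $P$, $P_A$, $P_B$ really can be chosen of the claimed sizes, and that the purifying system $P$ is assigned to neither adversary so that the dilation respects the bipartite attack structure). Both points are handled by the standard bounds---rank-$r$ states admit rank-$r$ purifications, and a CPTP map on $\mathbb{C}^d$ admits a Stinespring dilation with ancilla dimension at most $d^2$---together with the explicit observation that $P$ is traced out without ever being touched by $U_{AQP_A}^{xr}$ or $U_{BP_B}^{yr}$.
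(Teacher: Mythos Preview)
Your proposal is correct and follows essentially the same approach as the paper: purify each $\sigma_{AB}^r$ by adjoining a $2^{2q}$-dimensional register $P$, Stinespring-dilate $\channel_{AQ}^{xr}$ and $\channel_B^{yr}$ with ancillas $P_A$ and $P_B$ of the stated dimensions, and then observe that tracing out $PP_AP_B$ recovers the original state so the winning probability is unchanged. The paper's appendix phrases the final step slightly differently---it casts the winning probability as an optimization and remarks that enlarging the feasible set to all higher-dimensional pure-state/unitary strategies can only increase the optimum---whereas you argue the stronger (and equally valid) point that the specific purified strategy reproduces the original outcome statistics exactly.
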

\begin{proof}
Refer to Methods.
\end{proof}

The fact that purification does not reduce the maximum achievable score allows us to reduce the analysis of the mixed state and quantum channel attacks to one with pure states and unitaries.
Since the number of qubits influences the size of the $\delta$-nets formed, we quantify the size of such nets when quantum channels and mixed states are utilized,
\begin{theorem}
    Consider the set of $q$-qubit general strategies (with $q$ qubits in subsystems $A$ and $B$ respectively). Then, there exists a $\delta$-covering net of the set of purified states in the Euclidean norm with covering number $\abs{\vars{N}_S}$ and $\delta$-covering nets for the set of unitaries corresponding to the quantum channels $\channel_{AQ}^{xr}$ and $\channel_{B}^{yr}$ in the operator norm with covering numbers $\abs{\vars{N}_A}$ and $\abs{\vars{N}_B}$ respectively, where
    \begin{gather*}
        \log_2\abs{\vars{N}_S}\leq 2^{4q+1}\log_2\left(1+\frac{2}{\delta}\right),\quad
        \log_2\abs{\vars{N}_A}\leq 2^{6q+7}\log_2\left(1+\frac{2}{\delta}\right),\quad
        \log_2\abs{\vars{N}_B}\leq 2^{6q+4}\log_2\left(1+\frac{2}{\delta}\right).
    \end{gather*}
\end{theorem}
\begin{proof}
Apply Theorem~\ref{thm:covering_number_state} for purified state of dimension $4q$ and Theorem~\ref{thm:covering_number_unitary} for unitary of dimension $3(q+1)$ for Alice and $3q$ for Bob.
\end{proof}
We note that in the proof of Theorem~\ref{thm:covering_number_unitary}, we have not made use of the fact that $U_A$ and $U_B$ are purifications of their corresponding CPTP maps.
This purification property implies that only the first $2^q$ rows of the unitary matrix are relevant to describe the strategy since the states are always initialized as $\ket{0}$ in systems $P_AP_B$.
As such, it may be possible to further reduce the covering number by restricting the set of unitaries discussed.
For simplicity, we leave any such optimization for future work.

The partitioning strategy here aims to address the assumption where loss $\eta$ is input-independent, and to account for shared randomness.
When analyzing $q$-qubit pure strategies, the set of high score strategies is defined with the idea of $(C,l)$-perfect strategies, where for $l$ pairs of strings $(x,y)$, the attacks declare no photon detection with probability $1-\eta$ and the score is upper bounded by $C$.
We note that this set of strategies may not encompass all optimal attacks.
One particular set of attacks not considered are attacks with loss that varies with $(x,y)$ pairs, and it is unclear if selecting a strategy with the same $\eta$ for all $(x,y)$ is the optimal strategy.

We also note that in many protocols, it is typical that shared randomness does not provide an advantage to adversaries, allowing us to simplify analysis to a single strategy without shared randomness.
In fact, this is the case for a lossless QPV~\cite{Bluhm2022} since $\sum_rp_r$ commutes with the maximization in Eq.~\eqref{eq:Winning_prob_opt}.
This allows us to argue that for any strategy, it would be optimal to pick one corresponding to an $r$ with the highest conditional score.
The same argument may no longer hold with loss since $\eta$ can vary with $r$, and the optimal score can vary non-linearly with $\eta$.
As such, an adversary mixing two strategies, one with higher loss and lower score and one with lower loss and higher score may have a overall larger score compared to a strategy with an average loss.

We begin our analysis by defining the set of strategies in a different manner, without assuming that $\eta$ is independent of $(x,y)$ and $r$.
\begin{definition}
    For any strategy, the input and shared randomness dependent score and transmission are defined as
    \begin{gather*}
        C_{rxy}=\sum_{s=0}^1 \sum_{z=0}^1 (-1)^{s+z}\Tr[(\Lambda_s^{f(x,y)}\otimes A_{z}^{xyr}\otimes B_{z}^{xyr})\rho_{V A'B}^{xyr}]\\
        p_{t|rxy}=1-\Tr[(\mathbb{I}_V\otimes A^{xyr}_{\perp}\otimes B_{\perp}^{xyr})\rho_{V A'B'}^{xyr}]
    \end{gather*}
    respectively, and the matching condition is defined as
    \begin{equation*}
        p_a=\Tr[(\mathbb{I}_V\otimes A^{xyr}_{z}\otimes B_{z'}^{xyr})\rho_{V A'B'}^{xyr}]=0,\,\forall z\neq z'.
    \end{equation*}
    A $q$-qubit general strategy is a $(\{p_r\}_r,C,\eta)$-strategy if it satisfies $2^{-2n}\sum_{xy}C_{xy}=C$, $2^{-2n}\sum_{xy}\eta_{xy}=\eta$, and the matching condition for all $(x,y)$. Furthermore, the strategy has a $(\cthnorm,\etath,\{l_{1,r},l_{2,r},l_{3,r},l_{4,r}\}_r)$-partition if for each $r$, there are:
    \begin{enumerate}
        \item $l_{1,r}$ pairs of $(x,y)$ satisfying $C_{rxy}\geq p_{t|rxy} \cthnorm$, $p_{t|rxy}\geq\etath$,
        \item $l_{2,r}$ pairs of $(x,y)$ satisfying $C_{rxy}\geq p_{t|rxy} \cthnorm$, $p_{t|rxy}<\etath$,
        \item $l_{3,r}$ pairs of $(x,y)$ satisfying $C_{rxy}<p_{t|rxy} \cthnorm$, $p_{t|rxy}\geq\etath$,
        \item $l_{4,r}$ pairs of $(x,y)$ satisfying $C_{rxy}<p_{t|rxy} \cthnorm$, $p_{t|rxy}<\etath$.
    \end{enumerate}
    We term the strategy for each $r$ as a sub-strategy with $(\cthnorm,\etath,l_{1,r},l_{2,r},l_{3,r},l_{4,r})$-partition.
\end{definition}
In this definition, there are no explicit assumptions on the $r$ and $(x,y)$-dependency of loss and error, and all attack strategies can be described by this set of strategies.
Note that we are free to choose the threshold $\cthnorm$ and $\etath$.

Following the idea from the original security analysis~\cite{ABB23,Llorenc2023}, we seek to provide an upper bound on the number of high conditional rounds, specifically $l_{1,r}$ for any strategy.
As such, let us focus on this partition and define the set of states that can satisfy the corresponding conditions for different basis values $f(x,y)$.
\begin{definition}\label{def: normalised C-eta set}
    Let $\tilde{\eta}\in[0,1]$ and $\hat{C}'\in[-1,1]$.
    The set of output states for a partition for a basis is defined as
    \begin{gather*}
        \hat{\vars{S}}_{\alpha'}^{\hat{C}',\tilde{\eta}}=\{\ket{\Psi}_{V A'B'}=(\mathbb{I}_V \otimes U_{QA}\otimes U_B)(\ket{\Phi^+}_{VQ}\otimes \ket{\psi}_{AB}):\exists \{A_z\}_z,\{B_z\}_z,\, s.t.,\,C_{\alpha'}\geq p_t \hat{C}',\,p_t\geq \tilde{\eta},p_a=0\},
    \end{gather*}
    where $U_{QA}$, $U_B$ and $\ket{\psi}_{AB}$ are restricted in the same manner as described in Theorem~\ref{thm:purification}, i.e. $U_{QA}$ and $U_B$ are purifications of CPTP maps for $q+1$ and $q$ qubits respectively, and $\ket{\psi}_{AB}$ remains a purification of a mixed $2q$-qubit state.
\end{definition}
For the partition of interest, these sets can be defined with $\hat{C}'=\cthnorm$ and $\tilde{\eta}=\etath$.
Using the same arguments for a tight trace distance bound made when analyzing $q$-qubit pure strategy, we demonstrate in Cor.~\ref{cor: existence trace distance bound for normalised score} that any state drawn from one set will be far from at least one other set.

The third step of the security proof involves the formation of a classical rounding.
\begin{definition}[Classical Rounding, modified from Ref.~\cite{Bluhm2022,Llorenc2023}]
    A function $g:\{0,1\}^{3k}\rightarrow\{\vars{S}:\vars{S}\subseteq[m],\abs{\vars{S}}=m-1\}$ is termed a $(\etath,\cthnorm)$-classical rounding of size $k$ if for any function choice $f:\{0,1\}^{2n}\rightarrow[m]$, any $l_{1,r}\in[2^{2n}]$, any sub-strategy with $(\cthnorm,\etath,l_{1,r},l_{2,r},l_{3,r},l_{4,r})$-partition, there exist functions $f_A:\{0,1\}^n\rightarrow\{0,1\}^k$, $f_B:\{0,1\}^n\rightarrow\{0,1\}^k$ and $\lambda\in\{0,1\}^k$ such that $f(x,y)\in g(f_A(x),f_B(y),\lambda)$ for $l_{1,r}$ pairs of $(x,y)$.
\end{definition}
We note that in this case we focus on each sub-strategy instead of the full strategy, but each sub-strategy remains a valid strategy.
This allows us to construct a classical rounding.
\begin{theorem}
\label{thm:classical_rounding_exist_gen}
    Suppose $\cthnorm$ and $\etath$ are such that $\tilde{\delta}(\cthnorm,\etath)>0$ and let $\delta<\frac{\tilde{\delta}}{6}$. Then, there exists a classical rounding of size $k=2^{6q+7}\log_2\Big\lceil2+\frac{12}{\tilde{\delta}}\Big\rceil$.
\end{theorem}
\begin{proof}
Same as proof of Theorem~\ref{thm:rounding_from_delta_sec}, except that we use sets $\vars{S}_{\alpha'}^{\cthnorm,\etath}$ and $l_{1,r}$ for each sub-strategy.
The size $k$ follows from the modifications from purification of the $q$-qubit mixed strategy.
\end{proof}

We follow the original security proof and demonstrate that when a random function is used and when $q$ is bounded, there is a bound on the number of input pairs in the high conditional score and high transmission partition, $l_{1,r}$.
\begin{theorem}
\label{thm:nu_value_compute_gen}
    Fix a $(\etath,\cthnorm)$-classical rounding with $k=2^{6q+7}\log_2\Big\lceil2+\frac{12}{\tilde{\delta}}\Big\rceil$ and let $q\leq\frac{1}{6}n-q_0$. Then, a uniform random function $f:\{0,1\}^{2n}\rightarrow[m]$ fulfills the following with probability at least $1-2^{-\beta}$: For any $f_A$, $f_B$ and $\lambda$, $f(x,y)\in g(f_A(x),f_B(y),\lambda)$ holds for less than $2^{2n}(1-\nu)$ pairs of $(x,y)$, for 
    \begin{equation*}
        \nu= h_b^{-1}\left\{\log_2\left(\frac{m}{m-1}\right)-2^{9-6q_0}\log_2\Bigg\lceil2+\frac{12}{\tilde{\delta}}\Bigg\rceil-\frac{\beta}{2^{2n}}\right\},
    \end{equation*}
    where $h_b(x)$ is the binary entropy function.
\end{theorem}
\begin{proof}
Same proof as Theorem~\ref{thm:nu_value_compute}, except using a different $k$.
\end{proof}
Consequently, for any sub-strategies with $l_{1,r}>2^{2n}(1-\nu)$, they require a large number of qubits to successfully implement.
\begin{theorem}
\label{thm:sub_strategy_req}
    A uniform function $f:\{0,1\}^{2n}\rightarrow [m]$ has the following with probability at least $1-2^{-\beta}$: Any sub-strategy with $l_{1,r}>2^{2n}(1-\nu)$ requires $q>\frac{1}{6}n-q_0$ to implement.
\end{theorem}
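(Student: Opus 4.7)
The plan is to deduce this theorem from Theorems~\ref{thm:classical_rounding_exist} and~\ref{thm:nu_value_compute} by a direct contrapositive argument. The intuition is that any low-error, high-transmission sub-strategy can be converted into a classical rounding of controlled size that agrees with $f$ on $l_{1,r}$ pairs; but a random $f$ is, with overwhelming probability, not well-approximated by any such classical rounding on more than $2^{2n}(1-\nu)$ pairs unless the rounding is sufficiently large, i.e.\ unless the underlying $q$ is large.

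More concretely, first I would assume for contradiction that there exists a sub-strategy with $(\epstol,\etatol,l_{1,r},l_{2,r},l_{3,r},l_{4,r})$-partition such that $l_{1,r}>2^{2n}(1-\nu)$ and that is implementable with $q\leq\tfrac{1}{6}n-q_0$. By the hypothesis on $\epstol$, $\etatol$ and $\eta$ (taken implicitly as in the preceding theorems), the sets $\vars{S}_0^{\epstol\eta,\etatol}$ and $\vars{S}_1^{\epstol\eta,\etatol}$ obey the nontrivial trace-distance separation $\tilde{\delta}(\epstol,\etatol,\eta)>0$ established in Sec.~\ref{sec:Trace_Dist_Improvement}. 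Theorem~\ref{thm:classical_rounding_exist} then produces from this sub-strategy a classical rounding $g$ of size $k=2^{6q+7}\left[\bigl\lceil\log_2(1+12/\tilde{\delta})\bigr\rceil+1\right]$, together with functions $f_A$, $f_B$ and $\lambda$ such that $g(f_A(x),f_B(y),\lambda)=f(x,y)$ for $l_{1,r}$ pairs $(x,y)$.

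Next, I would invoke Theorem~\ref{thm:nu_value_compute} for exactly this choice of $k$. Because we assumed $q\leq\tfrac{1}{6}n-q_0$, the hypothesis of that theorem is satisfied, so with probability at least $1-2^{-\alpha}$ over the uniform random $f$, every classical rounding of size $k$ (in particular the one just constructed) agrees with $f$ on strictly fewer than $2^{2n}(1-\nu)$ pairs. This directly contradicts $l_{1,r}>2^{2n}(1-\nu)$. Therefore, on the same high-probability event, no sub-strategy with $l_{1,r}>2^{2n}(1-\nu)$ can be implemented with $q\leq\tfrac{1}{6}n-q_0$, which is the desired statement.

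The only real subtlety, and the one step I would be most careful about, is ensuring that the bound $k=2^{6q+7}\left[\bigl\lceil\log_2(1+12/\tilde{\delta})\bigr\rceil+1\right]$ coming from Theorem~\ref{thm:classical_rounding_exist} is the same $k$ feeding into Theorem~\ref{thm:nu_value_compute}: the factor $2^{9-6q_0}$ appearing inside $h_b^{-1}$ in the definition of $\nu$ is precisely what one obtains by substituting $q=\tfrac{1}{6}n-q_0$ into $2^{6q+7}$ and normalizing by $2^{2n}$, so the two statements plug into each other cleanly. The rest is a one-line contradiction; no additional probabilistic or quantum arguments are needed beyond what has already been developed.
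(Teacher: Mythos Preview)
Your proposal is correct and follows essentially the same contrapositive argument as the paper: assume a sub-strategy with $l_{1,r}>2^{2n}(1-\nu)$ and $q\le\tfrac{1}{6}n-q_0$, invoke Theorem~\ref{thm:classical_rounding_exist} to obtain $g,f_A,f_B,\lambda$ matching $f$ on $l_{1,r}$ pairs, and then contradict Theorem~\ref{thm:nu_value_compute} on the probability-$(1-2^{-\alpha})$ event. Your remark that the quantifier ``for any $f_A,f_B,\lambda$'' in Theorem~\ref{thm:nu_value_compute} is what makes the high-probability event simultaneously rule out all sub-strategies is in fact slightly more explicit than the paper's own write-up.
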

\begin{proof}
From Theorem~\ref{thm:classical_rounding_exist_gen}, we know that a classical rounding exists for suitably chosen $\etath$ and $\cthnorm$.
Assume there exists a sub-strategy with $l_{1,r}>2^{2n}(1-\nu)$ that requires $q\leq\frac{n}{6}-q_0$ to implement.
By definition, it implies that the sub-strategy satisfies $C_{rxy}\geq p_{t|rxy} \cthnorm$ and $p_{t|rxy}\geq \etath$.
From the classical rounding, it implies that we can find a $f_A$, $f_B$ and $\lambda$ such that $f(x,y)\in g(f_A(x),f_B(y),\lambda)$ for at least $2^{2n}(1-\nu)$ pairs of $(x,y)$.

Since the sub-strategy requires only $q\leq\frac{n}{6}-q_0$ to implement, by Theorem~\ref{thm:nu_value_compute_gen}, we necessarily have that for a randomly selected function $f$, $f(x,y)\in g(f_A(x),f_B(y),\lambda)$ for less than $2^{2n}(1-\nu)$ pairs of $(x,y)$ with probability of at least $1-2^{-\beta}$.
This indicates that except with probability less than $2^{-\beta}$, the two statements contradict.
As such, for a random function selection, the assumption is not true, i.e. sub-strategies with $l_{1,r}>2^{2n}(1-\nu)$ require $q>\frac{n}{6}-q_0$ to implement, with probability at least $1-2^{-\beta}$.
\end{proof}

We immediately have a corollary that if we restrict the sub-strategies to have $q\leq\frac{1}{6}n-q_0$, then the sub-strategies have $l_{1,r}\leq 2^{2n}(1-\nu)$ with high probability.
\begin{corollary}
\label{cor:sub_strategy_lim}
    Let the choice of $f$ be a random function, and Alice and Bob are restricted to strategies with $q\leq\frac{1}{6}n-q_0$. Then, for any sub-strategy, $l_{1,r}\leq 2^{2n}(1-\nu)$ except with probability of $1-2^{-\beta}$.
\end{corollary}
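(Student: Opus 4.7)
The plan is to obtain the corollary as the direct contrapositive of Theorem~\ref{thm:sub_strategy_req}. Because the theorem's quantifier ``any sub-strategy with $l_{1,r} > 2^{2n}(1-\nu)$ requires $q > \frac{1}{6}n - q_0$ to implement'' sits inside the $1 - 2^{-\alpha}$ probability over the uniform random choice of $f$, the implication $l_{1,r} > 2^{2n}(1-\nu) \Rightarrow q > \frac{1}{6}n - q_0$ holds simultaneously for every sub-strategy on a single high-probability event. Contrapositively, on the same event, imposing the hypothesis $q \leq \frac{1}{6}n - q_0$ rules out any sub-strategy with $l_{1,r} > 2^{2n}(1-\nu)$, which is exactly the bound claimed.

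Concretely, I would first fix $f$ in the event (of probability at least $1 - 2^{-\alpha}$) supplied by Theorem~\ref{thm:sub_strategy_req}. I would then take the shared-randomness decomposition $\sigma_{RAB} = \sum_r p_r \dyad{r}_R \otimes \sigma_{AB}^r$ of the adversary's resource and note that, since each component indexed by $r$ uses the same bounded registers $A$ and $B$ of size $q$ qubits each, each sub-strategy is itself a valid $q$-qubit strategy with $q \leq \frac{1}{6}n - q_0$. Applying the contrapositive pointwise in $r$ then gives $l_{1,r} \leq 2^{2n}(1-\nu)$ for every $r$, and hence for every sub-strategy.

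No union bound over $r$ is required, because the quantifier over sub-strategies was already absorbed into the probability statement of Theorem~\ref{thm:sub_strategy_req}. The only mild point to check, which I would flag explicitly, is the placement of that quantifier: one must confirm that the theorem guarantees the implication uniformly over all sub-strategies of a bounded-memory adversary on the same event of $f$, not merely for one strategy at a time; this is clear from how sub-strategies are defined via the $(\{p_r\}_r,\varepsilon,\eta)$-strategy partitioning. There is no substantive obstacle beyond this bookkeeping, since all the quantitative work has already been discharged by Theorem~\ref{thm:nu_value_compute} and Theorem~\ref{thm:sub_strategy_req}.
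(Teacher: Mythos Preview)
Your proposal is correct and matches the paper's approach: the paper presents this corollary as an immediate consequence of Theorem~\ref{thm:sub_strategy_req}, with no separate proof given beyond the remark that restricting $q\leq\frac{1}{6}n-q_0$ forces $l_{1,r}\leq 2^{2n}(1-\nu)$ on the same high-probability event. Your explicit attention to the placement of the universal quantifier over sub-strategies inside the probability statement is the right bookkeeping and is exactly what makes the contrapositive go through without a union bound.
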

This allows us to upper bound the score for any sub-strategy.

We first consider the maximum score for a sub-strategy with $l_{1,r}\leq 2^{2n}(1-\nu)$, noting that the score is upper bounded by the transmission,
\begin{theorem}
\label{thm:ABE Bound}
    For any general adversary strategy with transmission $\eta_{rxy}$, $C_{rxy}\leq\eta_{rxy}$.
\end{theorem}
\begin{proof}
    Refer to Methods.
\end{proof}
When the transmission of the sub-strategy, $\eta_r$, is high, there is a limit to the number of low transmission $(x,y)$ pairs, i.e. $l_{2,r}$ is upper bounded.
Since both $l_{1,r}$ and $l_{2,r}$, the number of $(x,y)$ pairs with high scores are bounded, we can compute an upper bound to the achievable score,
\begin{theorem}
\label{thm:sub_strategy_err}
    Let the choice of $f$ be a random function, and let Alice and Bob be restricted to strategies with $q\leq\frac{1}{6}n-q_0$. For any sub-strategy with $(\cthnorm,\etath,l_{1,r},l_{2,r},l_{3,r},l_{4,r})$-partition, $l_{1,r}\leq 2^{2n}(1-\nu)$ and average transmission $\eta_r$, the score $C_r$ is upper bounded by $\min\left\{\eta_r-\etath(1-\cthnorm)\left[\frac{\eta_r-\etath}{1-\etath}-1+\nu\right],\eta_r\right\}$.
\end{theorem}
\begin{proof}
Refer to Methods.
\end{proof}

Since the upper bound for each sub-strategy is linear in $\eta_r$, we can combine them to compute an upper bound on the overall score for any strategy.
\begin{theorem}
\label{thm:Error_Rate_QPV_Entanglement}
    Let $f$ be a random function, and let Alice and Bob be restricted to $q$-qubit general strategies with $q\leq\frac{1}{6}n-q_0$.
    For any $(\{p_r\}_r,C,\eta)$-strategy with $(\cthnorm,\etath,\{l_{1,r},l_{2,r},l_{3,r},l_{4,r}\}_r)$-partition, with probability at least $1-2^{-\beta}$, the score is upper bounded by $C\leq\eta-\etath(1-\cthnorm)\left(\frac{\eta-\etath}{1-\etath}-1+\nu\right)$.
\end{theorem}
\begin{proof}
From Corollary~\ref{cor:sub_strategy_lim}, every sub-strategy involved in this overall strategy has $l_{1,r}\leq 2^{2n}(1-\nu)$ except with probability $2^{-\beta}$.
As such, from Theorem~\ref{thm:sub_strategy_err}, the score for each sub-strategy is given by
\begin{equation}
    C_r\leq\min\left\{\eta_r-\etath(1-\cthnorm)\left[\frac{\eta-\etath}{1-\etath}-1+\nu\right],\eta_r\right\}.
\end{equation}
We can simply bound the overall score as
\begin{equation}
\begin{split}
    C=&\sum_rp_rC_r\\
    \leq&\sum_rp_r\min\left\{\eta_r-\etath(1-\cthnorm)\left[\frac{\eta_r-\etath}{1-\etath}-1+\nu\right],\eta_r\right\}\\
    \leq&\sum_rp_r\eta_r-\etath(1-\cthnorm)\left[\frac{\sum_rp_r\eta_r-\etath}{1-\etath}-1+\nu\right]\\
    =&\eta-\etath(1-\cthnorm)\left[\frac{\eta-\etath}{1-\etath}-1+\nu\right],
\end{split}
\end{equation}
where the linearity of the upper bound on the score allows $r$ to be removed.
\end{proof}
Since the strategy described in the theorem includes all possible $q$-qubit general strategies, we obtain a valid upper bound on the achievable score for any such adversary for each $\eta$.
Therefore, if a larger score is observed in the experiment in the asymptotic regime, we can be confident that there must be a party present at $P$ participating in the QPV protocol.

Since both purification and transmission and score partitioning modify the security analysis to tackle a stronger adversary, they necessarily reduce the protocol's performance.
Purification directly impacts the the classical rounding size $k$, and this results in security against an adversary with lower number of qubits, $q=\frac{1}{6}n-q_0$ compared to $q=\frac{1}{2}n-q_0$, and a change in the second term of $\nu$.
We note that this term, with coefficient $2^{9-6q_0}$ goes to $0$ in the large $n$ regime ($q_0\rightarrow\infty$).

The score and transmission partitioning significantly degrade both loss tolerance and error rate.
For non-zero error rate tolerance, the score upper bound must satisfy $C^{UB}\leq\eta$, and from Theorem~\ref{thm:Error_Rate_QPV_Entanglement}, this requires that $\frac{\eta-\etath}{1-\etath}\geq 1-\nu$.
Since $\etath\geq\frac{1}{m}$ for there to be non-zero overlap in the sets $\hat{\vars{S}}_{\alpha'}^{\cthnorm,\etath}$, the loss tolerance can be bounded by
$\eta\geq1-\frac{m-1}{m}\nu$, where $\nu=h_b^{-1}(\log_2(\frac{m}{m-1}))$.
For $m=2$, this limits the loss tolerance to at most \SI{75}{\percent}, and for $m=3$, the loss tolerance is at \SI{90.6}{\percent}.
This deterioration of loss tolerance would make practical QPV implementation challenging, highlighting the need for improved security analysis techniques, which we leave for future work.

\begin{figure}[!h]
    \centering
    \includegraphics[width=0.8\linewidth]{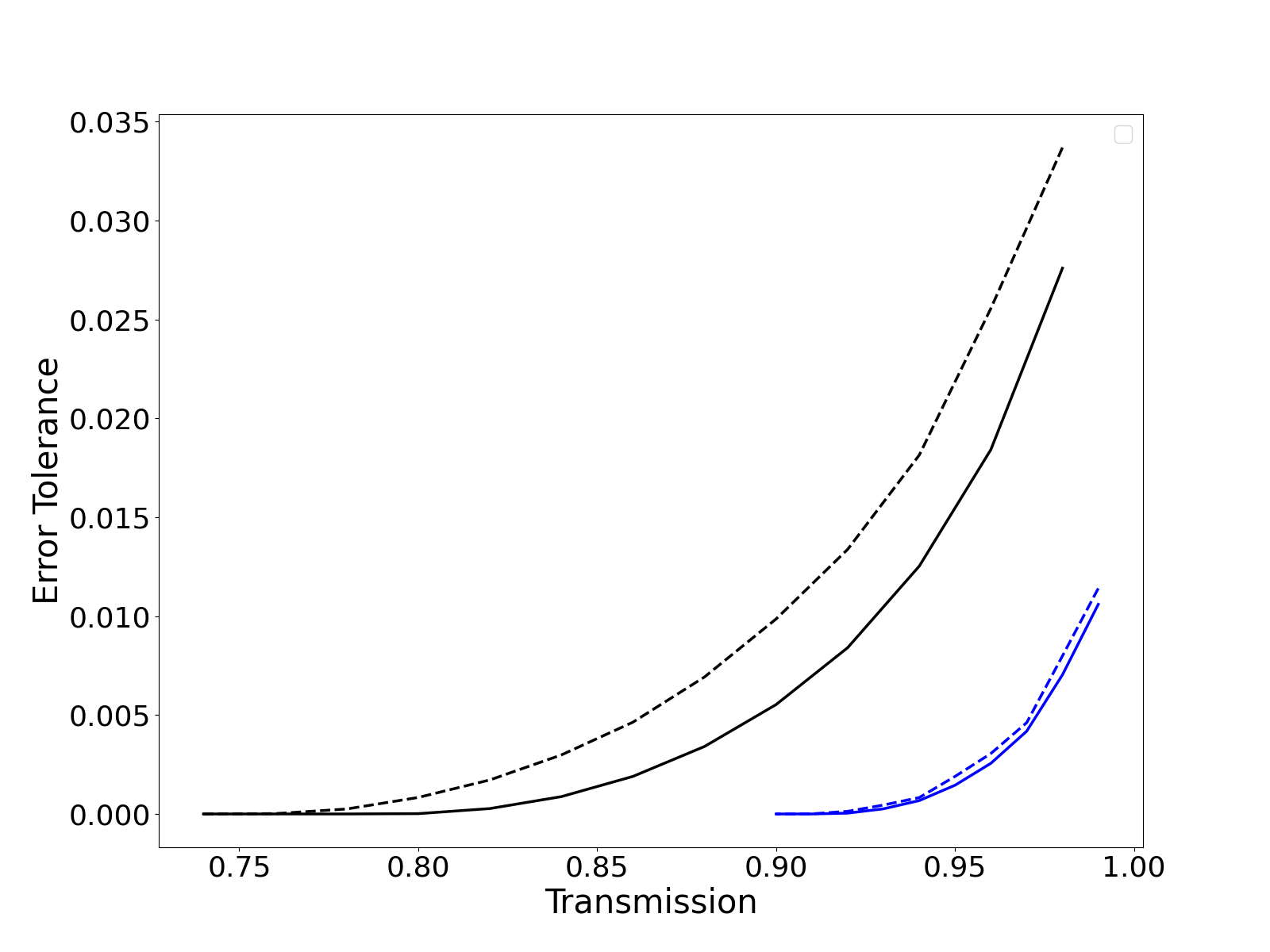}
    \caption{Plot of error tolerance at various transmission for secure position verification against a generalized entangled adversary. The simulation is performed for 2 basis (black line) and 3 basis (blue lines), and for $q_0=3 $ (solid lines) and $q_0=100$ (dotted lines).}
    \label{fig:Gen_Entangled_Adv_Plot}
\end{figure}

To illustrate the impact, we numerically optimize the score upper bound $C^{UB}$ described in Theorem~\ref{thm:Error_Rate_QPV_Entanglement} over $\cthnorm$ and $\etath$. 
For simplicity, we fix $\frac{\beta}{2^{2n}}=10^{-10}$ and map the maximum score to the minimum error rate.
Figure~6 shows severe degradation of performance even at large $q_0$, with loss tolerance being at $\eta>0.75$ for 2 bases and $\eta>0.9$ for 3 bases.

\textbf{Current Guarantees and Remaining Gaps}:
The security results can be summarized as a corollary,
\begin{corollary}
Let $f$ be a random function.
Consider the multi-basis protocol where the verifiers certify that a party is at location $P$ if the score exceeds $C>\eta-\etath(1-\cthnorm)\left[\frac{\eta-\etath}{1-\etath}-1+\nu\right]$, where $\nu$ is as described in Theorem~\ref{thm:nu_value_compute_gen}.
Any adversary who:
\begin{enumerate}
    \item May share arbitrary (possibly unbounded) classical randomness.
    \item May prepare pre-shared mixed quantum states $\rho_{AB}$ of bounded dimension $2^{2q}$.
    \item May apply general quantum channel $\channel_{AQ}^{xr}$ with input and output dimension of $2^{q+1}$ and channel $\channel_B^{yr}$ with input and output dimension of $2^q$.
    \item May implement transmission strategies that depend on both the inputs $(x,y)$ and the shared randomness $r$.
    \item Is bounded in dimension by $q\leq\frac{1}{6}n-q_0$,
\end{enumerate}
can only be falsely certified to be at location $P$ with probability less than $2^{-\beta}$.
\end{corollary}

While the current analysis provides several generalizations, there remains gaps to address for future work:
\begin{enumerate}
    \item The framework does not account for adversaries with unbounded classical output size (but bounded quantum output size) from quantum channels $\channel_{AQ}^{xr}$ and $\channel_B^{yr}$.
    \item The framework does not incorporate the tightened security analysis for QPV presented in Ref.~\cite{Llorenc2025}. How to extend their approach to lossy channels or incorporate our generalizations into their security analysis remain open questions.
\end{enumerate}

\subsection{Application: Exchanging Keys with Location Credentials}

Beyond position verification, QPV can serve as a building block for position-based cryptography. For instance, in Ref.~\cite{Buhrman2014}, the authors developed a message authentication protocol from QPV, and demonstrated its application to position-based key exchange.
The key exchange component relies on QKD~\cite{Portmann2022_Security,Xu2020_ExptQKDReview}, which is a secure means of key exchange between two parties, Alice and Bob.
(We note that Alice and Bob here refer to the parties intending to exchange keys and not the adversaries in QPV. We keep this naming convention for simplicity since this is the convention of QKD and QPV, and it should be clear who Alice and Bob represent from the context.)
QPV-based message authentication then authenticates the messages exchanged between Alice and Bob, using their location as credentials.
While position-based key exchange is valuable if location information is the only credential to be used, the complexity of QPV implementation and its weaker security guarantee compared to the commonly used message authentication protocol via Wegman-Carter~\cite{Carter1977_WegmanCarter} means it is unlikely to serve as an alternative authentication method in QKD.

However, there remain scenarios where QPV-based authentication can be helpful.
Currently, most QKD relies on Wegman-Carter~\cite{Carter1977_WegmanCarter}, which requires Alice and Bob to have pre-shared keys that would be used to hash the message to generate a tag to check for message tampering.
After each QKD round, part of the shared keys can be recycled~\cite{Portmann2014} while the remainder is obtained from part of the QKD keys generated from that round.
This leads to two issues: (1) Alice and Bob need to pre-share a key for the first QKD round and (2) using QKD keys for message authentication causes the soundness parameter to accumulate.
The latter issue stems from the composability framework of QKD, where QKD keys are only $\varepsilon$-secure, and using these keys in the next round where perfectly secure authentication keys are required leads to $2\varepsilon$-security in the next round -- thus accumulating. 
Furthermore, authentication keys can also be consumed during denial-of-service attacks, which necessitates alternative means of authentication.
To address these issues, one has to rely on other methods of authentication for a fresh QKD round, which could be via (1) manual transfer of master keys between Alice and Bob by a trusted party for use in authentication, (2) public key infrastructure, and (3) QPV-based message authentication.
After this fresh QKD round, the key exchange process can revert to using Wegman-Carter and QKD keys until the next refresh.

All three alternative methods of authentication have their limitations, and a list of some aspects to consider is shown in Table~2.
Manual key transfer requires full trust in the party carrying the key, while the use of PKI only guarantees computational security of the message authentication scheme and QPV-based authentication is complex to implement.
Therefore, the choice of authentication method may depend on the application, and the use of a complex but more secure QPV-based authentication remains a viable option for this infrequent but important process.

Here, we formalize the security of key exchange with location credentials discussed in Ref.~\cite{Buhrman2014}, and make improvements to the proposal to reduce the QPV runs (1 QPV run is a complete QPV protocol, which may involve multiple rounds) required, which should help reduce the complexity of the implementation.
The upgrades come from two aspects: (1) modification of the QPV-based message authentication protocol and (2) modification of QKD to reduce message authentication requirements.

\begin{table*}[h]
    \centering
    \includegraphics[width=\linewidth]{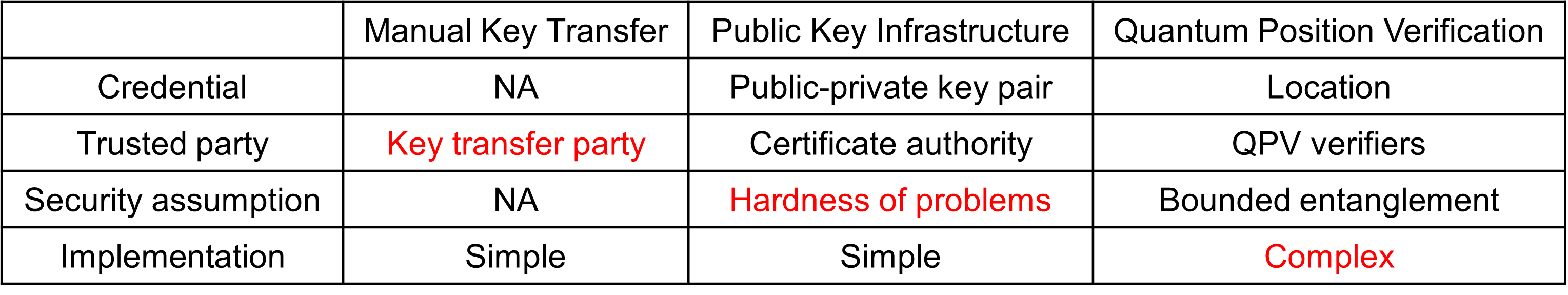}
    \caption{Comparison of aspects of the alternative authentication options. Red text highlights the least desirable option for each aspect.}
    \label{tab:QPKE_Comparison}
\end{table*}

The QPV-based message authentication protocol in Ref.~\cite{Buhrman2014} assigns the pass/fail of QPV runs as bits 1 and 0, and demonstrates security when proper encoding is utilized.
Intuitively, security follows from the fact that an adversary not at position $P$ cannot pass QPV and is thus unable to switch bits of the codeword from bit 0 to 1.
Here, we present an improvement to the protocol by combining it with message authentication codes -- using QPV to authenticate the key instead of the full message.
This involves the prover sending the message $M$ and a tag $T=h_K(M)$ using a $\delta$-almost 2-universal hash family $\{h_k:\{0,1\}^n\rightarrow\{0,1\}^{l_T}\}_{k\in\vars{K}}$ to the verifier.
The QPV-based message authentication is used to send $K$ in an authenticated manner, which in turn is used by the verifier to verify the tag $T$.
Since there exists 2-universal hash families with small key length, $l_K=2\lfloor l_T+\log_2(\frac{n}{l_T})+1\rfloor$ with $\delta=2^{-l_T+1}$~\cite{Tomamichel2011_QLHL}, this can reduce the QPV runs required from $O(n)$ to $O(\log_2n)$, for messages of length $n$.
Details of the proposed QPV-based message authentication and the proof of its security can be found in Section I of the supplementary information.

Standard QKD protocols typically require authentication of every message transmission step, which requires a significant number of QPV runs.
Here, we reduce this overhead by first selecting QKD protocols where message authentication is left to the final two communication steps~\cite{Kiktenko2020_Lightweight,QAKE} -- one from Alice to Bob and one from Bob to Alice.
Furthermore, the latter communication step only involves a single bit response from Bob to Alice, which requires a single QPV run  to be authenticated.
This reduces the required QPV runs to only $O(log_2n)+1$.

Details of the proposed key exchange with location credentials protocol can be found in Section II of the supplementary information, along with the formal security analysis, where we are able to demonstrate that the proposed protocol is secure.

Beyond supporting QKD, QPV enables other forms of position-based key exchange, where the unique nature of using a location credential provides anonymity to parties.
For example, a server providing data access only to clients that are at a certain location (e.g. an office building) can preserve client anonymity.
In this scenario, the client and server can exchange keys, with the server using the client's location as credential and the client using other means (e.g. PKI) to verify the server's identity in the key exchange process.
In Section III of the supplementary information, we present and formally prove the security of such a protocol which requires only a single QPV run.

\section{Discussion}

Interestingly, the ABE score was chosen in Ref.~\cite{sekatski2025certification} as it can be used to construct a steering inequality~\cite{Uola2020}. In quantum steering, the verifier's sub-normalized states $\left\{\tilde\rho_{z}^{\alpha'} = \Tr_Q \left[\ketbra{\Phi^+}{\Phi^+}_{VQ} (\mathbb{I}_V \otimes P_{z}^{\alpha'}) \right] \right\}_{\alpha', z}$ are called the \textit{assemblage} created by the prover's measurement $\{P_z^{\alpha'}\}_{\alpha', z}$. If the assemblage can be written as
\begin{equation}
    \tilde\rho_{z}^{\alpha'} = \sum_{\lambda} p(z|\alpha', \lambda)\ p_{\lambda} \rho_{\lambda}, \qquad \forall \alpha', z
\end{equation}
for some conditional probability distributions $\{p(z|\alpha',\lambda)\}_{z, \alpha', \lambda}$, $\{p_{\lambda}\}_{\lambda}$ and states $\{\rho_{\lambda}\}_{\lambda}$, we say that the assemblage admits a \textit{local hidden state} model. Otherwise, we say that the assemblage is \textit{steerable} by the prover. Now, suppose that the statistics characterized by $\{p(s,z|\alpha, \alpha')\}_{s,z,\alpha,\alpha'}$ is compatible with the local hidden state model:
\begin{equation}
    p(s,z|\alpha, \alpha') = \Tr[ \tilde\rho_{z}^{\alpha'} \Pi_{s}^{\alpha}] = \sum_{\lambda} p_{\lambda} p(z|\alpha',\lambda) \Tr[\rho_{\lambda} \Pi_{s}^{\alpha}],
\end{equation}
then there exists some POVM $\{M_{\lambda}\}_{\lambda}$ such that
\begin{equation}
    p_{\lambda} \rho_{\lambda} = \Tr_{Q}\left[\ketbra{\Phi^+}{\Phi^+}_{VQ} (\mathbb{I}_V \otimes M_{\lambda})\right], \qquad \forall \lambda.
\end{equation}
 
In a QPV protocol, Alice can use the same POVM $\{M_{\lambda}\}_{\lambda}$ and measure the register $Q$ before receiving the inputs $x$ and $y$. Suppose that Alice obtains the outcome $\lambda$, for each $\alpha' \in [m]$, she can sample a guess $\hat{z}_{\alpha'}$ from the distribution $\{p(z|\alpha',\lambda)\}_{z}$. She can then forward the guesses $\{\hat{z}_{\alpha'}\}_{\alpha'}$ to Bob at the speed of light. Therefore, by the time Alice and Bob receive both inputs, $x$ and $y$, both parties can reply to the respective verifier with $\hat{z}_{f(x,y)}$. Clearly, this will reproduce the original distribution $\{p(s,z|\alpha, \alpha')\}_{s,z,\alpha,\alpha'}$, and hence we conclude that whenever the statistics admit a local hidden state model, then there exists an attack on QPV that reproduces those statistics. This establishes a link between QPV and quantum steering, and it would be interesting to investigate the converse: when an assemblage is steerable by the prover (i.e., the statistics do not admit the local hidden state model), would that rule out any attacks on QPV (perhaps, without pre-shared entanglement) that reproduce the same statistics? An affirmative answer would allow a novel technique to analyze the security of QPV protocols. We leave this investigation for future work.

We also briefly analyze the feasibility of implementing multi-basis QPV with current technology.
For the quantum components, the numerical results against entangled adversaries (Figure~5) show the required error rate and transmission for secure implementation.
Additionally, the prover's measurement in basis $\alpha'$ must be performed with minimal delay.
Such stringent requirements on loss along with high-speed measurement are similar to those in device-independent cryptography.
For instance, in Ref.~\cite{Liu2021}, a high-fidelity (\SI{99.1}{\percent}) entangled source with high heralding efficiency (\SI{82.2}{\percent}) was prepared and one subsystem was transmitted to another party at some distance with overall detection efficiency (ignoring fiber loss, but including heralding efficiency) at \SI{68.3}{\percent}.
The detection is performed with a high-speed Pockels cell for basis selection, with overall measurement delay of \SI{230}{\nano\second}, which corresponds to a positional uncertainty of \SI{69}{\metre}.
Assuming that the same entangled source can be modified to generate a Bell state with the same fidelity, the verifier can use the Bell state to prepare $\ket{\psi^{\alpha}_s}$ by measuring the first photon in basis $\alpha$ (Note that the entanglement source in Ref.~\cite{Liu2021} generates the entangled state $\cos(24.3^\circ)\ket{HV}+\sin(24.3^\circ)\ket{VH}$. Based on the design of the source, the coefficients of the $HV$ and $VH$ components can be tuned by adjusting the waveplates orientations to control the polarization of the light entering the Sagnac loop, so it is reasonable to assume that the fidelity can be maintained for a class of entangled state preparation, which includes the Bell state at angle \SI{45}{\degree}).
This leads to an error rate of \SI{0.6}{\percent} when both prover and verifier measure in the same basis (assuming a Werner state preparation).
For transmission through an optical fiber with loss $\eta_{ch}\sim10^{-0.2L/10}$ over a distance $L$ (in km), Figure~5 shows that secure implementation against entangled adversaries (but not generalized) is possible up to $L=$\SI{6.8}{\kilo\metre} (total effective loss of about \SI{50}{\percent} for $q_0=5.5$).
This improves upon earlier analysis with modified SDP, which achieves $L=$\SI{2.8}{\kilo\metre} (total effective loss of \SI{60}{\percent}).
The distance may be increased further by increasing the number of basis measurements on the prover side.

We note that the performance improvements discussed above are based on the improved security analysis against $q$-qubit restricted strategies, which does not account for the generalized adversary strategies. Against $q$-qubit general strategies, the proposed experimental setup would not achieve security as the transmission falls below the 75\% threshold required for security (see Figure~6). This contrast highlights a key tension: while our improved security analysis can significantly improve error and loss tolerance, these gains are realized only under more restricted adversary assumptions. When the adversary model is broadened to include features such as input-dependent loss and unbounded shared randomness, the security requirements become much more demanding and the achievable performance degrades substantially.

The implementation also imposes requirements on the classical system. 
The verifier is required to send the classical messages to the prover with minimal delay -- which could in principle be close to the speed of light via free-space communication.
In addition, the prover is required to receive the messages and process the received messages quickly.
The speed of the computation depends on the function $f$. For simple functions like inner product, computation requires only a few clock cycles. Using a field programmable gate array (FPGA) with clock cycles of \SI{500}{\mega\hertz} yields a delay of $<$\SI{20}{\nano\second} (location uncertainty of $<$\SI{6}{\metre}).
Other classical delays exist but are generally small relative to the distances (\SI{6.8}{\kilo\metre}) that QPV can be implemented over.

While the brief analysis above suggests that QPV is feasible, there can be more considerations in practice.
For instance, the protocol proposed is difficult to scale to longer distances since the number of bases has to be increased to improve loss tolerance.
One would also have to consider finite-size effects, which may further exacerbate the issue, where a minimum number of rounds are required to distinguish the honest implementation if the error rate of the implementation and that for secure QPV is close.
Therefore, for practical implementations, techniques such as that in Ref.~\cite{ABB23}, where the prover is required to commit to whether a photon is received, may need to be adopted to address the increased channel loss.
We leave these investigations for future work.

In conclusion, we have presented several improvements to both QPV security and protocol design.
Firstly, we introduced a multi-basis QPV protocol in which the state preparation is restricted to six states while the prover measures in multiple bases.
This reduces the complexity of state preparation and improves the flexibility of protocol implementation, where performance can be easily adapted for different provers without modifying the verifier's device.
Decoupling state preparation from measurement basis (determined by $f(x,y)$) also enables third-party state preparation.
Secondly, we refined the security analysis against entangled adversaries through two improvements: (1) tightened trace distance bounds and (2) a modified classical rounding construction.
These improvements yield tighter security bounds, which can relax experimental requirements.
Thirdly, we generalized the adversary model to encompass mixed states, unbounded shared randomness and input-dependent transmission.
Notably, accounting for input-dependent transmission and unbounded shared randomness significantly degrades QPV performance, particularly loss tolerance.
This highlights the need for improved security analysis techniques, including addressing the use of general quantum channels with unbounded classical outputs.
Finally, we illustrate the potential utility of QPV beyond location verification, for example as an authentication mechanism in QKD.

\section{Methods}

\subsection{Preliminaries: Set Covering}

It is known that the upper bound of the covering number of a hypersphere with norm 1 can be given by~\cite{Vershynin2012}
\begin{theorem}
\label{thm:Norm_ball_covering}
    Let $\norm{.}$ be any norm on points $x\in\mathbb{R}^d$. The covering number for a $\delta$-net for a norm-ball of unit radius can be bounded by
    \begin{equation*}
        \abs{\vars{N}}\leq \left(1+\frac{2}{\delta}\right)^d
    \end{equation*}
\end{theorem}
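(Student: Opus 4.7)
The plan is to prove this via a standard volume-packing argument using a maximal $\delta$-separated set. First I would observe that if we construct a set $\vars{N}\subseteq B$ (the unit ball with respect to $\norm{\cdot}$) that is maximal with respect to the property that any two distinct points $x,y\in\vars{N}$ satisfy $\norm{x-y}>\delta$, then by maximality $\vars{N}$ is automatically a $\delta$-net: any point in $B$ not already in $\vars{N}$ must lie within $\delta$ of some point of $\vars{N}$, for otherwise we could enlarge $\vars{N}$. So it suffices to upper bound the cardinality of any $\delta$-separated subset of $B$.

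Next I would exploit the separation by placing around each $x\in\vars{N}$ an open ball $B(x,\delta/2)$ of radius $\delta/2$. The triangle inequality shows these balls are pairwise disjoint, and all of them are contained in the enlarged ball $B(0,1+\delta/2)$ since every $x\in\vars{N}$ lies in the unit ball. Comparing Lebesgue volumes in $\mathbb{R}^d$, and using the fact that for any norm the unit ball has positive finite volume $V$ and the volume of $B(x,r)$ equals $r^d V$ (by translation invariance and scaling of Lebesgue measure), I would conclude
\begin{equation*}
\abs{\vars{N}}\cdot\left(\frac{\delta}{2}\right)^d V \;\leq\; \left(1+\frac{\delta}{2}\right)^d V,
\end{equation*}
which rearranges directly to $\abs{\vars{N}}\leq (1+2/\delta)^d$, matching the stated bound.

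I do not anticipate a real obstacle here: the argument is the textbook volumetric estimate, and the only subtle point is that it applies to \emph{any} norm on $\mathbb{R}^d$ because Lebesgue measure on $\mathbb{R}^d$ is translation invariant and scales as $r^d$ under dilations, so the volume ratio between a ball of radius $1+\delta/2$ and a ball of radius $\delta/2$ in a given norm is the norm-independent factor $((1+\delta/2)/(\delta/2))^d$. If one wanted to be entirely careful, I would briefly note that a maximal $\delta$-separated subset of $B$ exists (e.g.\ by Zorn's lemma, or constructively by a greedy procedure since $B$ is compact in finite dimension and any separated set in a compact metric space is finite).
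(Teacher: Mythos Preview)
Your argument is correct and is precisely the standard volumetric packing argument from Vershynin~\cite{Vershynin2012}. The paper itself does not supply a proof of this theorem; it simply quotes the bound from that reference, so your proposal fills in exactly the intended justification.
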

As such, we can compute the covering number for the set of pure states,
\begin{theorem}
\label{thm:covering_number_state}
    The set of pure states in a Hilbert space with dimension $d$, i.e. $\{\ket{\psi}:\ket{\psi}\in\Hil_d\}$, has a $\delta$-covering net in the Euclidean norm with covering number
    \begin{equation*}
        \abs{\vars{N}_S}\leq \left(1+\frac{2}{\delta}\right)^{2d}
    \end{equation*}
\end{theorem}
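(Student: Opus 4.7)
The plan is to reduce the question to counting a $\delta$-net for a Euclidean unit ball in real dimension $2d$ and then invoke Theorem~\ref{thm:Norm_ball_covering}. A pure state in $\Hil_d$ is simply a unit vector in $\mathbb{C}^d$, and $\mathbb{C}^d$ is isometrically isomorphic to $\mathbb{R}^{2d}$ (with the Euclidean norm) via the map sending $\ket{\psi}=\sum_j(a_j+ib_j)\ket{j}$ to the real vector $(a_1,b_1,\ldots,a_d,b_d)$. Under this identification the Euclidean norm on the state space coincides with the standard Euclidean norm on $\mathbb{R}^{2d}$, so the set of pure states corresponds exactly to the unit sphere $S^{2d-1}\subset\mathbb{R}^{2d}$.

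Next I would observe that the unit sphere is contained in the closed unit ball, so any $\delta$-net for the ball is automatically a $\delta$-cover of the sphere (the net points may lie off the sphere, but each pure state still has a net point within distance $\delta$, which is all that is required for a $\delta$-covering net). Applying Theorem~\ref{thm:Norm_ball_covering} with dimension $2d$ and the Euclidean norm then gives the bound
\begin{equation*}
    \abs{\vars{N}_S}\leq \left(1+\frac{2}{\delta}\right)^{2d},
\end{equation*}
which is the desired inequality.

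There is no real obstacle here: the only things to verify are the isometry $\mathbb{C}^d\simeq\mathbb{R}^{2d}$ and the monotonicity of the covering number under inclusion of subsets. If one instead insisted that the net points themselves be pure states, one could project each net point radially onto the sphere; this at most doubles the covering radius (yielding a $2\delta$-net on the sphere from a $\delta$-net on the ball), but since the statement only concerns a $\delta$-covering net with no requirement that net elements lie in $\vars{S}_q^p$, the argument above suffices as stated.
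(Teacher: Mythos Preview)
Your proof is correct and follows essentially the same approach as the paper: identify $\mathbb{C}^d$ with $\mathbb{R}^{2d}$ so that pure states form the Euclidean unit sphere, and then apply Theorem~\ref{thm:Norm_ball_covering} in real dimension $2d$. Your extra remark that the sphere is contained in the unit ball (so the ball's $\delta$-net covers it) makes explicit a step the paper leaves implicit, but otherwise the arguments coincide.
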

\begin{proof}
The set of pure states in a Hilbert space with dimension $d$, can be described by
\begin{equation}
    \ket{\psi}=\sum_{j=1}^d(a_j+b_ji)\ket{j},
\end{equation}
for any orthonormal basis $\{\ket{j}\}_j$, with the normalization constraint $\sqrt{\sum_j(a_j^2+b_j^2)}=1$.
As such, the set of pure states forms a unit sphere in Euclidean norm, with $x=(\{a_j\}_j,\{b_j\}_j)$, i.e. $x\in\mathbb{R}^{2d}$, and with $\norm{x}_2=1$ for all states.
Therefore, by Theorem~\ref{thm:Norm_ball_covering}, $\abs{\vars{N}_S}\leq \left(1+\frac{2}{\delta}\right)^{2d}$.
\end{proof}

We can also compute the covering number for a unitary matrix in 2-norm, using ideas from Ref.~\cite{Caro2022}.
We note that 2-norm (or spectral norm) here is defined to be the operator norm induced from the 2-norm, i.e. $\norm{A}_{op}=\max_{\norm{x}_2\leq 1}\norm{Ax}_2$.
\begin{theorem}
\label{thm:covering_number_unitary}
The set of unitary matrices that act on a Hilbert space with dimension $d$, i.e. $\{U|U:\Hil_d\rightarrow\Hil_d\}$, has a $\delta$-covering net in the operator norm with covering number
\begin{equation*}
    \abs{\vars{N}_U}\leq\left(1+\frac{2}{\delta}\right)^{2d^2}.
\end{equation*}
\end{theorem}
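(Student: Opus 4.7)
The plan is to embed the unitary group $U(d)$ into a normed real vector space of the correct dimension and then invoke Theorem~\ref{thm:Norm_ball_covering} directly. First I would identify the space $M_d(\mathbb{C})$ of $d\times d$ complex matrices with $\mathbb{R}^{2d^2}$ by splitting each entry $A_{jk}=a_{jk}+ib_{jk}$ into real and imaginary parts. The operator norm $\norm{A}_{op}=\max_{\norm{x}_2\leq 1}\norm{Ax}_2$ descends to a genuine norm on this $2d^2$-dimensional real vector space, since it satisfies homogeneity with respect to real scaling, the triangle inequality, and positive definiteness.

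Next I would observe that every unitary $U$ satisfies $\norm{U}_{op}=1$, so $U(d)$ is contained in the closed operator-norm unit ball $B_{op}(1)\subset\mathbb{R}^{2d^2}$. Applying Theorem~\ref{thm:Norm_ball_covering} to this norm-ball with ambient dimension $2d^2$ immediately produces a $\delta$-net $\vars{N}_U$ for $B_{op}(1)$ of cardinality at most $\left(1+\frac{2}{\delta}\right)^{2d^2}$. Because $U(d)\subseteq B_{op}(1)$, every unitary lies within operator-norm distance $\delta$ of some point in $\vars{N}_U$, so this same set covers $U(d)$ with the claimed bound.

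I do not expect any genuine obstacle; the argument is essentially a volume/dimension count once the right embedding is chosen. The only delicate point is that Theorem~\ref{thm:Norm_ball_covering} produces net elements that lie in the ambient ball rather than necessarily inside $U(d)$, i.e.\ an \emph{external} rather than \emph{internal} net. For the downstream discretization arguments used in the $q$-qubit strategy analysis this distinction is immaterial, since one only needs each strategy to be $\delta$-close to a fixed reference element. If one insisted on an internal net of unitaries, the standard remedy is to take a $\delta/2$-net of $B_{op}(1)$ and then, for each net point lying within $\delta/2$ of $U(d)$, replace it by a nearest unitary; this yields an internal net at the cost of replacing $2$ by $4$ inside the parenthesis. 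Since the theorem only claims the $\left(1+\frac{2}{\delta}\right)^{2d^2}$ bound, the external-net construction above suffices, and no further work is required.
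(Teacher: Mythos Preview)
Your proposal is correct and follows essentially the same approach as the paper: embed $M_d(\mathbb{C})$ into $\mathbb{R}^{2d^2}$ via real and imaginary parts, transport the operator norm to this space, note that unitaries lie in the unit ball, and apply Theorem~\ref{thm:Norm_ball_covering}. Your additional remark about external versus internal nets is a valid observation that the paper does not make explicit, but the underlying argument is identical.
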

\begin{proof}
Since all unitary operators do not alter the Euclidean norm of any vector, all unitary operators have operator norm 1, $\norm{U}_{op}=1$.
As such, we can describe the set of unitary operators $U=\sum_{jk}(a_{jk}+b_{jk}i)\dyad{j}{k}$ by a vector $x=(\{a_{jk}\}_{jk},\{b_{jk}\}_{jk})$, with $x\in\mathbb{R}^{2d^2}$, since the unitary contains $d^2$ complex entries.
The set of unitaries is constrained by $\norm{x}_{opvec}=1$, where $\norm{.}_{opvec}$ is a norm on the vector $x\in\mathbb{R}^{2d^2}$, and is computed by reforming $x$ into the corresponding unitary matrix and computing the corresponding operator norm.
Therefore, by Theorem~\ref{thm:Norm_ball_covering}, $\abs{\vars{N}_U}\leq\left(1+\frac{2}{\delta}\right)^{2d^2}$.
\end{proof}

\subsection{Proof of Theorem~\ref{thm:rounding_from_delta_sec}}

Let $\delta<\frac{\tilde{\delta}}{6}$.
Consider $\delta$-nets $\vars{N}_S$, $\vars{N}_A$ and $\vars{N}_B$, corresponding to that for the set of pure states of dimension $2q$ in Euclidean norm, the set of unitaries acting on a Hilbert space with dimension $q+1$ in operator norm, and the set of unitaries acting on a Hilbert space with dimension $q$ in operator norm.
Let $\ket{\phi_{\lambda}}\in\vars{N}_S$, $U^{x'}_{AQ}\in\vars{N}_A$ and $U^{y'}_B\in\vars{N}_B$, where $\lambda$, $x'$ and $y'$ label the choice of the nets.
Define a set that extends $\vars{S}^{\cth,\eta}_{i}$ by a $3\delta$-ball, $\vars{S}^{\cth,\eta}_{i,3\delta}=\{\ket{\psi}:\exists\ket{\phi}\in\vars{S}^{\cth,\eta}_{i},\Delta(\ket{\psi},\ket{\phi})\leq3\delta\}$.
We define a function $g$ that computes an output for each net with final state $\ket{\psi^{x'y'}}=(U_{AQ}^{x'}\otimes U_B^{y'})(\ket{\phi_{\lambda}}\otimes\ket{\Phi^+})$:
\begin{enumerate}
    \item If $\ket{\psi^{xy}}\notin\vars{S}^{\cth,\eta}_{i,3\delta}$ for all $i\in[m]$, choose set $g(x',y',\lambda)=[1,m-1]$.
    \item Otherwise, pick any $i\in[m]$ where $\ket{\psi^{x'y'}}\in\vars{S}^{\cth,\eta}_{i,3\delta}$ and find a state $\ket{\phi'}\in\vars{S}^{\cth,\eta}_i$ such that $\Delta(\ket{\phi'},\ket{\psi^{x'y'}})\leq 3\delta$, which by definition exists. By Cor.~\ref{cor: existence trace distance bound}, there exists $j\neq i$ such that $\forall\ket{\psi_j}\in\vars{S}^{\cth,\eta}_j$, $
    \Delta(\ket{\phi'},\ket{\psi_j})\geq \tilde{\delta}$. We can then set $g(x',y',\lambda)=[m]\setminus\{j\}$.
\end{enumerate}
By design, the set chosen is of size $m-1$ as required.
We note here that the set choice in step 1 does not matter since these do not form part of the $l$ $(x,y)$-pairs with high score.
We also note that since $\Delta(\ket{\phi'},\ket{\psi^{x'y'}})\leq 3\delta$ and $\Delta(\ket{\phi'},\ket{\psi_j})\geq \tilde{\delta}$, then it must be true that $\Delta(\ket{\psi^{x'y'}},\ket{\psi_j})\geq \tilde{\delta}-3\delta$.

What remains is to prove that the $g$ defined forms a $(\eta,\cth)$-classical rounding.
For any $(C,\eta)$-strategy with $(\cth,l)$-partition, we can define $x'=f_A(x)$, $y'=f_B(y)$, $\lambda$ as the labels indicating the closest unitary and state (of the nets) to $U_{AQ}^{x}$, $U_{B}^{y}$ and $\ket{\psi}_{AB}$.
For each of the $l$ $(x,y)$ pairs, the strategy satisfies $C_{xy}\geq \cth$, $p_{t|xy}=\eta$ and the matching condition.
As such, the state generated before measurement is given by $(U_{AQ}^{x}\otimes U_{B}^{y})\ket{\psi'}\in\vars{S}^{\cth,\eta}_{f(x,y)}$, where we define $\ket{\psi'}=\ket{\psi}\otimes\ket{\Phi^+}$.
By definition of the $\delta$-net, we can always find the closest unitary and state $U^{x'}_{AQ}$, $U^{y'}_B$ and $\ket{\phi_{\lambda}}$, each of which is $\delta$-close to the strategy.
Let $\ket{\phi_{\lambda}'}=\ket{\phi_{\lambda}}\otimes\ket{\Phi^+}$. The two states, $(U_{AQ}^{x}\otimes U_{B}^{y})\ket{\psi'}$ and $(U_{AQ}^{x'}\otimes U_B^{y'})\ket{\phi_{\lambda}'}$, are close,
\begin{equation}
\begin{split}
    &\Delta((U_{AQ}^{x}\otimes U_{B}^{y})\ket{\psi'},(U_{AQ}^{x'}\otimes U_B^{y'})\ket{\phi_{\lambda}'})\\
    \leq&\norm{(U_{AQ}^{x}\otimes U_{B}^{y})\ket{\psi'}-(U_{AQ}^{x'}\otimes U_B^{y'})\ket{\phi_{\lambda}'}}_2\\
    \leq&\norm{(U_{AQ}^{x}\otimes U_{B}^{y})\ket{\psi'}-(U_{AQ}^{x'}\otimes U_{B}^{y})\ket{\psi'}}_2+\norm{(U_{AQ}^{x'}\otimes U_{B}^{y})\ket{\psi'}-(U_A^{x'}\otimes U_B^{y'})\ket{\psi'}}_2\\
    &+\norm{(U_{AQ}^{x'}\otimes U_B^{y'})\ket{\psi'}-(U_{AQ}^{x'}\otimes U_B^{y'})\ket{\phi_{\lambda}'}}_2\\
    \leq&\norm{(U_{AQ}^{x}-U_{AQ}^{x'})\otimes\mathbb{I}}_{op}\norm{(\mathbb{I}\otimes U_{B}^{y})\ket{\psi'}}_2+\norm{\mathbb{I}\otimes(U_{B}^{y}-U_B^{y'})}_{op}\norm{(U_{AQ}^{x'}\otimes\mathbb{I} )\ket{\psi'}}_2+\norm{\ket{\psi'}-\ket{\phi_{\lambda}'}}_2\\
    \leq&3\delta.
\end{split}
\end{equation}
Therefore, $\ket{\psi^{x'y'}}\in\vars{S}^{\cth,\eta}_{f(x,y),3\delta}$, for which the assigned net returns value $g(x',y',\lambda)=[m]\setminus\{j\}$ such that $\forall \ket{\psi_j}\in\vars{S}_{j}$, $\Delta(\ket{\psi^{x'y'}},\ket{\psi_j})\geq\tilde{\delta}-3\delta$.
Since $\tilde{\delta}>6\delta$, it is clear that $\ket{\psi^{x'y'}}\notin\vars{S}^{\cth,\eta}_{j,3\delta}$ and $f(x,y)\neq j$.
This provides the correct assignment of $f(x,y)\in g(x',y',\lambda)$.
We note that the $\delta$-net for unitaries is defined by norm $\norm{\cdot}_{opvec}$, and $\norm{u-v}_{opvec}=\norm{U-V}_{op}$, where $u$ and $v$ are vector representations of the unitaries $U$ and $V$.
Since this works for all $l$ pairs of $(x,y)$, we can conclude that $g$ is a valid classical rounding.

The sizes of the discretized sets are given by $\log_2\abs{\vars{N}_S}\leq2^{2q+2}\log_2(1+\frac{2}{\delta})$ for the state, $\log_2\abs{\vars{N}_A}\leq2^{2q+3}\log_2(1+\frac{2}{\delta})$ for Alice's unitary acting on $q+1$ qubits, and $\log_2\abs{\vars{N}_B}\leq2^{2q+1}\log_2(1+\frac{2}{\delta})$  for Bob's unitary acting on $q$ qubits.
Therefore, we select the maximum set size as $k$, with
\begin{equation}
    k=2^{2q+3}\log_2\Bigg\lceil 2+\frac{12}{\tilde{\delta}}\Bigg\rceil,
\end{equation}
where we note that $\delta$ can be selected close enough to $\frac{\tilde{\delta}}{6}$ to cause a single bit change for the term within the logarithm, and the ceiling function ensures that $2^k$ (input size to classical rounding function $g$) is an integer.

\subsection{Proof of Theorem~\ref{thm:nu_value_compute}}

For a fixed classical rounding of size $k$, it is possible to implement a maximum of $2^k\times (2^{k})^{2^n}\times (2^{k})^{2^n}=2^{k(2^{n+1}+1)}$ functions.
Therefore, for a random function $f$, the probability that we can find a suitable $f_A$ and $f_B$ such that for at most $2^{2n}\nu$ $(x,y)$ pairs (which can be grouped into a set $\vars{S}$ with size $\abs{\vars{S}}\leq 2^{2n}\nu$), $f(x,y)\notin g(f_A(x),f_B(y),\lambda)$ is
\begin{equation}
\begin{split}
    &\Pr[f:\exists f_A,f_B,\lambda,\exists\vars{S},\abs{\vars{S}}\leq 2^{2n}\nu\quad s.t.\quad\substack{\forall (x,y)\in\vars{S}:f(x,y)\notin g(f_A(x),f_B(y),\lambda)\\\forall(x,y)\notin\vars{S}:f(x,y)\in g(f_A(x),f_B(y),\lambda)}]\\
    =&\frac{\abs{\{f:\exists f_A,f_B,\lambda,\exists\vars{S},\abs{\vars{S}}\leq 2^{2n}\nu\quad s.t.\quad\substack{\forall (x,y)\in\vars{S}:f(x,y)\notin g(f_A(x),f_B(y),\lambda)\\\forall(x,y)\notin\vars{S}:f(x,y)\in g(f_A(x),f_B(y),\lambda)}\}}}{\abs{\{f:\{0,1\}^{2n}\rightarrow[m]\}}}\\
    \leq&\frac{1}{m^{2^{2n}}}\sum_{i=1}^{2^{2n}\nu}\sum_{\vars{S}:\abs{\vars{S}}=i}\sum_{f_A,f_B,\lambda}\abs{\{f:\substack{\forall (x,y)\in\vars{S}:f(x,y)\notin g(f_A(x),f_B(y),\lambda)\\\forall(x,y)\notin\vars{S}:f(x,y)\in g(f_A(x),f_B(y),\lambda)}\}}\\
    \leq&\frac{1}{m^{2^{2n}}}\sum_{i=1}^{2^{2n}\nu}\sum_{\vars{S}:\abs{\vars{S}}=i}\sum_{f_A,f_B,\lambda}(m-1)^{2^{2n}-i}\\
    \leq&\left(\frac{m-1}{m}\right)^{2^{2n}}2^{k(2^{n+1}+1)}\sum_{i=1}^{2^{2n}\nu}\begin{pmatrix} 2^{2n} \\ i \end{pmatrix}\\
    \leq&2^{2^{2n}[\log_2(1-\frac{1}{m})+h_b(\nu)]+2^{2q+3}(2^{n+1}+1)\log_2\big\lceil2+\frac{12}{\tilde{\delta}}\big\rceil}\\
    \leq&2^{2^{2n}[\log_2(1-\frac{1}{m})+h_b(\nu)+2^{5-2q_0}\log_2\big\lceil2+\frac{12}{\tilde{\delta}}\big\rceil]},
\end{split}
\end{equation}
where the third line upper bounds the counting of functions where suitable $f_A$, $f_B$, $\lambda$ and $\vars{S}$ can be found by the sum of functions that can satisfy the required conditions for every valid choice of $f_A$, $f_B$, $\lambda$ and $\vars{S}$.
The fourth line notes that for $(x,y)\in\vars{S}$, there is only one valid value that $f(x,y)$ can take while for $(x,y)\notin\vars{S}$, there are $(m-1)$ values.
The fifth line upper bounds $(m-1)^{-i}\leq 1$, while the sixth line upper bounds the sum of binomial coefficients~\cite{vanLint1999} using binary entropy $h_b(x)$ and substitutes $k$.
We want to select a suitable $\nu$ such that being able to find $f_A$ and $f_B$ is the exception that occurs with probability not more than $2^{-\beta}$, which implies
\begin{equation}
    \nu= h_b^{-1}\left\{\log_2\left(\frac{m}{m-1}\right)-2^{5-2q_0}\log_2\Bigg\lceil2+\frac{12}{\tilde{\delta}}\Bigg\rceil-\frac{\beta}{2^{2n}}\right\}
\end{equation}
where we note $\nu\in[0,\frac{1}{2}]$ for the inverse of binary entropy to exist.

\subsection{Improved trace distance bounds}
Based on the security analysis against unentangled adversaries, we know that the adversaries cannot obtain a high score if they share a fixed state. If a given state gives a high score for one particular basis, the state will give a low score for another basis, which limits the maximum score that the adversaries can get using that state. Therefore, to obtain a high average score, the state preparation needs to be different for different inputs. Here, we want to formalize this intuition. Specifically, we aim to show that if the adversaries obtain a high score for all possible basis choices, they need to prepare states that are far apart (in trace distance), for at least one pair of basis choices.

We begin by proving the following approximation
\begin{lemma}[Approximation to the function $\sqrt{1-x^2}$]\label{lemma: linear approximation}
    Let $t \in \mathbb{N}$, $x \in [0,1]$, and $\{x_0, ..., x_t\}$ be arbitrary nodes such that $x_0 = 0$, $x_t = 1$, and $x_{k-1} < x_k < x_{k+1}$ for all $0 < k < t$. Then, we have
    \begin{equation}
        \sqrt{1 - x^2} \geq \min_{k \in [t]} \, m_k x + c_k,
    \end{equation}
    where, for all $k \in [t]$,
    \begin{equation} \label{eq: gradient and intercept}
        \begin{split}
            m_k &= \frac{\sqrt{1 - x_k^2} - \sqrt{1 - x_{k-1}^2}}{x_k - x_{k-1}},\\
            c_k &= \sqrt{1 - x_k^2} - m_k x_k.
        \end{split}
    \end{equation}
\end{lemma}
\begin{proof}
    To prove the claim, we first show that the function $\sqrt{1-x^2}$ is concave within the interval $[0,1]$. We do this by showing that the second derivative is negative,
    \begin{equation*}
        \frac{\mathrm{d}^2}{\mathrm{d}x^2} \sqrt{1-x^2} = -\frac{1}{(1-x^2)^{3/2}} < 0, \qquad \forall x \in [0,1].
    \end{equation*}
    Since its second derivative is negative in the interval $[0,1]$, the function $\sqrt{1-x^2}$ is concave in that interval.

    Next, since the curve $y = \sqrt{1 - x^2}$ is concave, we have that for any $x_{k-1}, x_k \in [0,1]$ such that $x_{k-1} < x_k$ and for any $x \in [x_{k-1}, x_k] $ and $\lambda = (x_k - x)/(x_k - x_{k-1})$:
    \begin{align*}
         \sqrt{1-x^2} = \sqrt{1-\left( \lambda x_{k-1} + (1-\lambda) x_k\right)^2} &\geq \lambda \sqrt{1-x_{k-1}^2} + (1-\lambda)\sqrt{1-x_k^2}\\
        &= \lambda \left(\sqrt{1-x_{k-1}^2} - \sqrt{1 - x_k^2} \right) + \sqrt{1 - x_k^2}\\
        &= \frac{\sqrt{1-x_{k-1}^2} - \sqrt{1 - x_k^2} }{x_k - x_{k-1}} (x_k - x) + \sqrt{1 - x_k^2}\\
        &= \frac{\sqrt{1-x_k^2} - \sqrt{1 - x_{k-1}^2} }{x_k - x_{k-1}} x + \left(\sqrt{1 - x_k^2} - \frac{\sqrt{1-x_k^2} - \sqrt{1 - x_{k-1}^2} }{x_k - x_{k-1}} x_k \right)\\
        &=: m_k x + c_k
    \end{align*}
    
    Also, by construction, the line $y = m_k x + c_k$ intersects the curve $y = \sqrt{1-x^2}$ at the points $\left(x_{k-1}, \sqrt{1-x_{k-1}^2}\right)$ and $\left(x_{k}, \sqrt{1-x_{k}^2}\right)$. Thus, for any $x \notin [x_{k-1}, x_{k}]$, we have $\sqrt{1-x^2} < m_{k} x + c_{k}$. Therefore, for any $x \in [0,1]$, $\min_{k \in [t]} \, m_k x + c_k$ can be attained by $k^*$ such that $x \in [x_{k^*-1}, x_{k^*}]$, in which case the bound $\sqrt{1-x^2} \geq m_{k^*} x + c_{k^*}$ applies. This proves the claim of the lemma.
\end{proof}

Based on the above approximation, we can start deriving some lower bound on the trace distance. First, we consider the lower bound on the average trace distance between different state preparations, given that each state gives a high score for each basis choice,
\begin{theorem}[Average trace distance bound] \label{thm: average trace distance bound}
    Let $t \in \mathbb{N}$, $\{x_0, ..., x_t\}$ be arbitrary nodes in the interval $[0,1]$ with $x_0 = 0$ and $x_t = 1$, and $\{m_k\}_{k \in [t]}$ and $\{c_k\}_{k \in [t]}$ be defined as in Eq.~\eqref{eq: gradient and intercept}. 
    
    Let $\eta \in [0,1]$ and $\cth \in [-1, 1]$. For each $i \in [m]$, let $\vars{S}_i^{\cth, \eta}$ be the set defined in Definition~\ref{def: C-eta set} for basis $i$:
    \begin{equation*}
        \vars{S}_{i}^{\cth, \eta} = \left\{\ket{\Psi_i} = (\mathbb{I}_V \otimes U^{i}_{Q \rightarrow AB}) \ket{\Phi^+}_{VQ}: \exists\{A^{i}_z\}_z, \{B^{i}_z\}_z, \, \mathrm{s.t.} \, C_i \geq \cth, p_t = \eta, p_a = 0 \right\},
    \end{equation*}
    where the set is characterized by the parameters $\cth$ and $\eta$. Denote $\ket{\psi_{i0}} = U^{i}_{Q\rightarrow AB} \ket{0}$ and $\ket{\psi_{i1}} = U^i_{Q \rightarrow AB} \ket{1}$. If $\ket{\Psi_i} \in \vars{S}^{\cth,\eta}_i$ for all $i \in [m]$, for any $j \in [m]$, then $\frac{1}{m-1}\sum_{i \neq j} \Delta(\ket{\Psi_i}, \ket{\Psi_j})$ is lower bounded by
    \begin{equation} \label{eq: sdp six-state}
    \begin{split}
        \tilde{\delta} = \min_{\vec{k} \in [t]^{m-1}} \min_{G^\ell} \quad & \frac{1}{m-1} \sum_{i \neq j} \frac{1}{2} m_{k_i} \mathrm{Re}\left\{ \braket{\psi_{i0}}{\psi_{j0}} + \braket{\psi_{i1}}{\psi_{j1}} \right\} + c_{k_i}\\
        \mathrm{s.t.} \quad& G^\ell\geq0\\
        &\frac{1}{2} \mathrm{Re}\left\{\braket{\psi_{i0}}{\psi_{j0}} + \braket{\psi_i1}{\psi_{j1}} \right\} \geq 0, \quad \forall i \neq j \\
    &\braket{\psi_{is}}{\psi_{is'}}=\delta_{ss'},\,\forall i\in [m]\\
    &\frac{1}{2}\sum_{s=0}^1 \mel{\psi_{is}}{A_{\perp}^i}{\psi_{is}} = 1 - \eta, \, \forall i \in [m]\\
    &\frac{1}{3}\Bigg[\frac{\cos \theta_i}{2} \sum_{s=0}^1 \sum_{a=0}^1 (-1)^{a+s} \mel{\psi_{is}}{A_a^i}{\psi_{is}} + \sin \theta_i \cos \phi_i \mathrm{Re} \left\{ \mel{\psi_{i0}}{(A_0^i - A_1^i)}{\psi_{i1}} \right\} \\
    &\qquad- \sin \theta_i \sin \phi_i \mathrm{Im} \left\{ \mel{\psi_{i0}}{(A_0^i - A_1^i)}{\psi_{i1}}\right\} \Bigg] \geq \cth, \, \forall i \in [m]\\
    &\mel{\psi_{is}}{A^i_a B^i_{b}}{\psi_{is}}=0,\,\forall a\neq b, \forall s\in\{0,1\}, \forall i \in [m]\\
    &[A^i_a,B^{i'}_{b}]=0,\, \forall i,i'\in [m], \, \forall a,b\in\{0,1,\perp\}
    \end{split}
    \end{equation}
\end{theorem}
\begin{proof}
    First, we note that for two pure states, $\ket{\Psi_i}$ and $\ket{\Psi_j}$, the trace distance between them is related to their inner product by~\cite{Wilde2017}
    \begin{equation*}
        \Delta(\ket{\Psi_i}, \ket{\Psi_j}) = \sqrt{1 - \abs{\braket{\Psi_i}{\Psi_j}}^2}.
    \end{equation*}
    Next, for each state $\ket{\Psi_i}$ with $i \neq j$, we can apply an arbitrary global phase without affecting the trace distance and the statistics. Consequently, without any loss of generality, we can take the inner products to be real and positive
    \begin{equation*}
        \abs{\braket{\Psi_i}{\Psi_j}} = \braket{\Psi_i}{\Psi_j} \geq 0, \quad \forall i \neq j.
    \end{equation*}
    For each $i \neq j$, we can lower bound the trace distance with the function
    \begin{align*}
        \Delta(\ket{\Psi_i}, \ket{\Psi_j}) &= \sqrt{1 - \braket{\Psi_i}{\Psi_j}^2}\\
        &\geq \min_{k_i \in [t]} m_{k_i}  \braket{\Psi_i}{\Psi_j} + c_{k_i}, \quad \forall i \neq j.
    \end{align*}
    Here, the equality comes from the fact that we can set the inner product $\braket{\Psi_i}{\Psi_j}$ to be real (and positive) without any loss of generality. The inequality comes from the application of Lemma~\ref{lemma: linear approximation}.
    
    We note that the above lower bound on the trace distance holds for any arbitrary fixed pair of states $\ket{\Psi_i}$ and $\ket{\Psi_j}$ but the right hand side of the bound clearly depends on the choice of states (via their inner product). To obtain a state-independent lower bound, we have to perform a minimisation over any states $\{\ket{\Psi_i} \in \vars{S}_i^{\cth, \eta}\}_{i \in [m]}$. In other words, we want to solve for 
    \begin{equation*}
        \min_{\{\ket{\Psi_i} \in \vars{S}_i^{\cth, \eta}\}_{i \in [m]}} \min_{\{k_i \in [t]\}_{i \neq j}} \frac{1}{m-1} \sum_{i \neq j} \frac{1}{2} m_{k_i} \mathrm{Re}\left\{ \braket{\psi_{i0}}{\psi_{j0}} + \braket{\psi_{i1}}{\psi_{j1}} \right\} + c_{k_i}.
    \end{equation*}
    
    Now, we note that since both optimizations are minimization, the order of the minimization can be swapped: we can minimize over the states first, then minimize over the choice of $\vec{k} = \{k_i: i \neq j\}$. Moreover, for each fixed choice of $\vec{k}$, the minimization over the states can be bounded using the SDP relaxation introduced in Ref.~\cite{wang2019characterising}, since the objective function and the inequalities that characterize the sets $\{\vars{S}_i^{\cth, \eta}\}_{i\in [m]}$ are all linear in $G^\ell$, where $G^\ell$ is the Gram matrix formed by the (sub-normalized) vectors $\{\ket{\xi_{is}^r} = \gamma_r \ket{\psi_{is}} \}_{i, s, r}$ where $\gamma_r \in \Gamma^\ell$. 
    
    Therefore, the optimization over the states can be replaced by the optimization over the Gram matrix $G^\ell$ satisfying the linear constraints that define the sets $\vars{S}_i^{\cth, \eta}$. For each choice of $\vec{k}$ this will give a valid state-independent lower bound on $\frac{1}{m-1}\sum_{i \neq j} \frac{1}{2} m_{k_i} \mathrm{Re}\left\{ \braket{\psi_{i0}}{\psi_{j0}} + \braket{\psi_{i1}}{\psi_{j1}} \right\} + c_{k_i}$. The state-independent lower bound on the average trace distance is finally obtained by choosing the $\vec{k}$ that minimizes the expression.
\end{proof}

Theorem~\ref{thm: average trace distance bound} gives a lower bound on the average trace distance $\frac{1}{m-1} \sum_{i \neq j} \Delta(\ket{\Psi_i}, \ket{\Psi_j})$ for some choice of $j \in [m]$ which holds whenever $\ket{\Psi_i} \in \vars{S}_i^{\cth, \eta}$ for all $i \in [m]$. However, this also gives Corollary~\ref{cor: existence trace distance bound} (restated below for convenience) which is easier to use in the security proof.
\begin{customthm}{\ref{cor: existence trace distance bound}}
Let $j \in [m]$. For any state $\ket{\Psi_j} \in \vars{S}_j^{\cth, \eta}$, there exists $i \in [m]$ such that $i \neq j$ and for any state $\ket{\Psi_i} \in \vars{S}_i^{\cth, \eta}$, we have $\Delta(\ket{\Psi_i}, \ket{\Psi_j})\geq \tilde{\delta}$, where $\tilde{\delta}$ is the solution to the optimization problem given by Eq.~\eqref{eq: sdp six-state}.
\end{customthm}
\begin{proof}
    From Theorem~\ref{thm: average trace distance bound}, if $\ket{\Psi_i} \in \vars{S}_i^{\cth, \eta}$ for all $i \in [m]$, we know that the average trace distance is lower bounded
    \begin{equation*}
        \frac{1}{m-1} \sum_{i \neq j} \Delta(\ket{\Psi_i}, \ket{\Psi_j}) \geq \tilde{\delta}.
    \end{equation*}
    Since the average trace distance is lower bounded by $\tilde{\delta}$, we know that at least one of the terms must be lower bounded by $\tilde{\delta}$, i.e., there exists $i^* \neq j$ such that $\Delta(\ket{\Psi_{i^*}}, \ket{\Psi_j}) \geq \tilde{\delta}$.
    
    We also note that the lower bound in Theorem~\ref{thm: average trace distance bound} is state-independent, and once the state $\ket{\Psi_j}$ is fixed, for each $i \neq j$, the optimization over $\ket{\Psi_i}$ is independent of the other $i' \neq j$. Therefore, for a fixed state $\ket{\Psi_j}$, it is optimal to choose $\{\ket{\Psi_i}\}_{i \neq j}$ that are individually the closest in trace distance to $\ket{\Psi_j}$. Since for all $i \neq j$ (including $i^*$), the optimizer is the closest state in $\vars{S}_{i}^{\cth, \eta}$ to the state $\ket{\Psi_j}$, the conclusion that there exists an index $i^* \neq j$ such that $\Delta(\ket{\Psi_{i^*}}, \ket{\Psi_j}) \geq \tilde{\delta}$ must also apply to any other states in $\vars{S}_{i^*}^{\cth,\eta}$ that may not be the closest to $\ket{\Psi_j}$. This implies the claim.
\end{proof}

Theorem~\ref{thm: average trace distance bound} and Corollary~\ref{cor: existence trace distance bound} are derived when considering the case where $\ket{\Psi_i} \in \vars{S}_i^{\cth, \eta}$ for all $i \in [m]$. This case is useful when considering $q$-qubit restricted strategies. It turns out that when considering $q$-qubit general strategies, it is more convenient to characterize the performance of the strategy by the normalized score $\hat{C}_i = C_i / p_t$. Fortunately, the normalized score is also compatible with the SDP relaxation~\cite{wang2019characterising}, and thus we have the following analogues of Theorem~\ref{thm: average trace distance bound} and Corollary~\ref{cor: existence trace distance bound} for the case where the sets are characterized by the normalized score,
\begin{theorem} \label{thm: average trace distance bound for normalised score}
    Let $t \in \mathbb{N}$, $\{x_0, ..., x_t\}$ be arbitrary nodes in the interval $[0,1]$ with $x_0 = 0$ and $x_t = 1$, and $\{m_k\}_{k \in [t]}$ and $\{c_k\}_{k \in [t]}$ be defined as in Eq.~\eqref{eq: gradient and intercept}. 
    
    Let $\etath \in [0,1]$ and $\cthnorm \in [-1, 1]$. For each $i \in [m]$, let $\hat{\vars{S}}_i^{\cthnorm, \etath}$ be the set defined in Definition~\ref{def: normalised C-eta set} for basis $i$:
    \begin{equation*}
        \hat{\vars{S}}_{i}^{\cthnorm, \etath} = \left\{\ket{\Psi_i} = (\mathbb{I}_V \otimes U^{i}_{Q \rightarrow AB}) \ket{\Phi^+}_{VQ}: \exists\{A^{i}_z\}_z, \{B^{i}_z\}_z, \, \mathrm{s.t.} \, C_i \geq \cthnorm p_t, p_t \geq \etath, p_a = 0 \right\},
    \end{equation*}
    where the set is characterized by the parameters $\cthnorm$ and $\etath$. Denote $\ket{\psi_{i0}} = U^{i}_{Q\rightarrow AB} \ket{0}$ and $\ket{\psi_{i1}} = U^i_{Q \rightarrow AB} \ket{1}$. If $\ket{\Psi_i} \in \hat{\vars{S}}^{\cthnorm,\etath}_i$ for all $i \in [m]$, for any $j \in [m]$, then $\frac{1}{m-1}\sum_{i \neq j} \Delta(\ket{\Psi_i}, \ket{\Psi_j})$ is lower bounded by
    \begin{equation} \label{eq: sdp six-state normalised}
    \begin{split}
        \tilde{\delta} = \min_{\vec{k} \in [t]^{m-1}} \min_{G^\ell} \quad & \frac{1}{m-1} \sum_{i \neq j} \frac{1}{2} m_{k_i} \mathrm{Re}\left\{ \braket{\psi_{i0}}{\psi_{j0}} + \braket{\psi_{i1}}{\psi_{j1}} \right\} + c_{k_i}\\
        \mathrm{s.t.} \quad& G^\ell\geq0\\
        &\frac{1}{2} \mathrm{Re}\left\{\braket{\psi_{i0}}{\psi_{j0}} + \braket{\psi_i1}{\psi_{j1}} \right\} \geq 0, \quad \forall i \neq j \\
    &\braket{\psi_{is}}{\psi_{is'}}=\delta_{ss'},\,\forall i\in [m]\\
    &\frac{1}{2}\sum_{s=0}^1 \mel{\psi_{is}}{A_{\perp}^i}{\psi_{is}} \leq 1 - \etath, \, \forall i \in [m]\\
    &\frac{1}{3}\Bigg[\frac{\cos \theta_i}{2} \sum_{s=0}^1 \sum_{a=0}^1 (-1)^{a+s} \mel{\psi_{is}}{A_a^i}{\psi_{is}} + \sin \theta_i \cos \phi_i \mathrm{Re} \left\{ \mel{\psi_{i0}}{(A_0^i - A_1^i)}{\psi_{i1}} \right\} \\
    &\qquad- \sin \theta_i \sin \phi_i \mathrm{Im} \left\{ \mel{\psi_{i0}}{(A_0^i - A_1^i)}{\psi_{i1}}\right\} \Bigg] \geq \frac{\cthnorm}{2} \sum_{s=0}^1 \sum_{a=0}^1 \mel{\psi_{is}}{A_a^i}{\psi_{is}}, \, \forall i \in [m]\\
    &\mel{\psi_{is}}{A^i_a B^i_{b}}{\psi_{is}}=0,\,\forall a\neq b, \forall s\in\{0,1\}, \forall i \in [m]\\
    &[A^i_a,B^{i'}_{b}]=0,\, \forall i,i'\in [m], \, \forall a,b\in\{0,1,\perp\}
    \end{split}
    \end{equation}
\end{theorem}
\begin{corollary}\label{cor: existence trace distance bound for normalised score}
    Let $j \in [m]$. For any state $\ket{\Psi_j} \in \hat{\vars{S}}_j^{\cthnorm, \etath}$, there exists $i \in [m]$ such that $i \neq j$ and for any state $\ket{\Psi_i} \in \hat{\vars{S}}_i^{\cthnorm, \etath}$, we have $\Delta(\ket{\Psi_i}, \ket{\Psi_j})\geq \tilde{\delta}$, where $\tilde{\delta}$ is given by Eq.~\eqref{eq: sdp six-state normalised}.
\end{corollary}

The proofs for Theorem~\ref{thm: average trace distance bound for normalised score} and Corollary~\ref{cor: existence trace distance bound for normalised score} are similar to the ones for Theorem~\ref{thm: average trace distance bound} and Corollary~\ref{cor: existence trace distance bound}, since those proofs only require that the constraints that characterize the sets $\vars{S}_i^{\cth,\eta}$ and $\hat{\vars{S}}_i^{\cthnorm,\etath}$ are both linear in $G^{\ell}$. Therefore, the only difference will be the explicit constraints for each set, whereas the remaining part of the proofs remain unchanged.

\subsection{Proof of Theorem~\ref{thm:purification}}
\label{app:Partial_Purification}

The maximum achievable score of an adversary can be expressed as an optimization problem:
\begin{equation}
\begin{split}
    \max_{\substack{\rho_{AB}^r\in\vars{S}_{2q}\\\channel_{AQ}^{xr}\in\vars{C}_{q+1},\channel_B^{yr}\in\vars{C}_q\\\{A_z^{xyr}\},\{B_z^{xyr}\}}}&\sum_{rszxy}\frac{(-1)^{s+z}p_r}{\abs{\vars{X}}\abs{\vars{Y}}}\Tr[(\Lambda_s^{f(x,y)}\otimes A_{z}^{xyr}\otimes B_{z}^{xyr})\rho_{A'B'V}^{xyr}]\\
    subj.\, to\quad&\sum_{xyr}\frac{p_r}{\abs{\vars{X}}\abs{\vars{Y}}}\Tr[(\mathbb{I}_V\otimes A_{\perp}^{xyr}\otimes B_{\perp}^{xyr})\rho_{A'B'V}^{xyr}]=1-\eta\\
    &\Tr[(\mathbb{I}_V\otimes A_{z}^{xyr}\otimes B_{\bar{z}}^{xyr})\rho_{A'B'V}^{xyr}]=0,\,\forall z,x,y,r
\end{split},
\end{equation}
where
\begin{equation}
    \rho_{A'B'V}^{xyr}=\channel_{AQ}^{xr}(\dyad{\Phi^+}_{VQ}\otimes\channel_B^{yr}(\rho_{AB}^r)).
\end{equation}

The purification can be performed in two steps.
We first purify the quantum system by introducing a quantum system $P$ of dimension $2^{2q}$, and selecting a purification such that $\Tr_P[\dyad{\psi_r}_{ABP}]=\rho^r_{AB}$~\cite{Wilde2017}.
Furthermore, we can lift the quantum channels to higher dimensional unitaries by introducing quantum systems $P_A$ and $P_B$, of dimension $2^{2(q+1)}$ and $2^{2q}$ respectively, and selecting the unitaries such that $\Tr_{P_A}[U^{xr}_{AQP_A}(\rho_{AQ}\otimes\dyad{0}_{P_A}) U_{AQP_A}^{xr\dagger}]=\channel_{AQ}^{xr}(\rho)$ for any state $\rho_{AQ}$ and $\Tr_{P_B}[U^{yr}_{BP_B}(\rho_B\otimes\dyad{0}_{P_B}) U_{BP_B}^{yr\dagger}]=\channel_B^{yr}(\rho_B)$ for any state $\rho_B$~\cite{Wilde2017}.
We note that the states and channels for different values of $r$ can be separately purified, with the purification systems being of the same Hilbert space, since they can be distinguished from the value of $r$.
Collating the changes, this leads to the partially purified state
\begin{equation*}
    \rho_{A'B'VPP_AP_B}=\sum_rp_r\dyad{r}_R\otimes[(U_{AQP_A}^{xr}\otimes U_{BP_B}^{yr})(\dyad{\Phi^+}_{VQ}\otimes\dyad{\psi^r}_{ABP}\otimes\dyad{0}_{P_AP_B})(U_{AQP_A}^{xr}\otimes U_{BP_B}^{yr})^{\dagger}]
\end{equation*}
and $\Tr_{PP_AP_B}[\rho_{RA'B'VPP_AP_B}]=\rho_{RA'B'V}$.

As a consequence, any mixed state strategy can be expressed as a pure state strategy of the higher dimension.
As such, optimizing over the larger set of higher dimension restricted state strategy can only lead to a larger maximum score, given by
\begin{equation}
\label{eq:Winning_prob_opt}
\begin{split}
    \max_{\substack{\ket{\psi^r}_{ABP}\in\vars{S}^p_{4q}\\U_{AQP_A}^{xr}\in\vars{C}^U_{3(q+1)},U_{BP_B}^{yr}\in\vars{C}^U_{3q}\\\{A_z^{xyr}\},\{B_z^{xyr}\}}}&\sum_{rszxy}\frac{(-1)^{s+z}p_r}{\abs{\vars{X}}\abs{\vars{Y}}}\Tr[(\Lambda_s^{f(x,y)}\otimes A_{z}^{xyr}\otimes B_{z}^{xyr})\dyad{\Psi_{xyr}}]\\
    subj.\, to\quad&\sum_{xyr}\frac{p_r}{\abs{\vars{X}}\abs{\vars{Y}}}\Tr[(\mathbb{I}_V\otimes A_{\perp}^{xyr}\otimes B_{\perp}^{xyr})\dyad{\Psi_{xyr}}]=1-\eta\\
    &\Tr[(\mathbb{I}_V\otimes A_{z}^{xyr}\otimes B_{\bar{z}}^{xyr})\dyad{\Psi_{xyr}}]=0,\,\forall z,x,y,r
\end{split},
\end{equation}
where
\begin{equation}
    \ket{\Psi_{xyr}}_{A'B'V}=(U_{AQP_A}^{xr}\otimes U_{BP_B}^{yr})(\ket{\Phi^+}_{VQ}\otimes\ket{\psi_r}_{ABP}).
\end{equation}

\subsection{Proof of Theorem~\ref{thm:ABE Bound}}

Consider the general strategy, with input and shared randomness dependent score and transmission
\begin{equation}
\begin{gathered}
    C_{rxy}=\sum_{s,z\in\{0,1\}}(-1)^{s+z}\mel{\Psi^{xy}}{\Lambda_s^{\alpha'}\otimes A_z^{xyr}\otimes B_z^{xyr}}{\Psi^{xy}}\\
    \eta_{rxy}=1-\mel{\Psi^{xy}}{\mathbb{I}\otimes A_{\perp}^{xyr}\otimes B_{\perp}^{xyr}}{\Psi^{xy}},
\end{gathered}
\end{equation}
where we set $\alpha'=f(x,y)$ and WLOG, we let the state be pure and measurement operators be projective.
We can simplify
\begin{equation}
    \sum_s(-1)^s\Lambda_s^{\alpha'}=\vec{a}\cdot\vec{\sigma},
\end{equation}
where $\vec{a}=[\cos\theta_{\alpha'},\sin\theta_{\alpha'}\cos\varphi_{\alpha'},\sin\theta_{\alpha'}\sin\varphi_{\alpha'}]$ and $\vec{\sigma}$ is a vector of Pauli matrices.
Let us define $\hat{AB}^{xyr}=\sum_{z\in\{0,1\}}(-1)^zA_z^{xyr}\otimes B_z^{xyr}$, $A^{xyr}_{\top}=A^{xyr}_0+A^{xyr}_1$ and $B^{xyr}_{\top}=B^{xyr}_0+B^{xyr}_1$.
Since $A^{xyr}_{\top}$ and $B^{xyr}_{\top}$ are projectors, we can define the sub-normalized state $\ket{\Psi_{\top}^{xy}}=(\mathbb{I}\otimes A^{xyr}_{\top}\otimes B^{xyr}_{\top})\ket{\Psi^{xy}}$, which is bounded by the transmission, $\braket{\Psi_{\top}^{xy}}{\Psi_{\top}^{xy}}\leq\eta_{rxy}$.
Noting that $\hat{AB}^{xyr}=(A^{xyr}_{\top}\otimes B^{xyr}_{\top})\hat{AB}^{xyr}(A^{xyr}_{\top}\otimes B^{xyr}_{\top})$, we can simplify the score
\begin{equation}
\begin{split}
    C_{rxy}=&\mel{\Psi^{xy}}{(\vec{a}\cdot\vec{\sigma})\otimes(A^{xyr}_{\top}\otimes B^{xyr}_{\top})\hat{AB}^{xyr}(A^{xyr}_{\top}\otimes B^{xyr}_{\top})}{\Psi^{xy}}\\
    =&\mel{\Psi^{xy}_{\top}}{(\vec{a}\cdot\vec{\sigma})\otimes\hat{AB}^{xyr}}{\Psi^{xy}_{\top}}\\
    \leq&\norm{\vec{a}\cdot\vec{\sigma}}\norm{\hat{AB}^{xyr}}\braket{\Psi^{xy}_{\top}}{\Psi^{xy}_{\top}}\\
    \leq&\eta,
\end{split}
\end{equation}
where $\norm{X}$ is the operator norm, and we note that $-\mathbb{I}\leq\vec{a}\cdot\vec{\sigma}\leq\mathbb{I}$ when $\abs{\vec{a}}=1$ (equivalent to a $\pm 1$ observable based on measurement in the $\vec{a}$ direction on the Bloch sphere), with the same for $\hat{AB}^{xyr}$ (contains projective measurements with eigenvalues $\pm 1$).

\subsection{Proof of Theorem~\ref{thm:sub_strategy_err}}

Define $l_{i,r}'=2^{-2n}l_{i,r}$.
The score of any sub-strategy with $(\cthnorm,\etath,l_{1,r},l_{2,r},l_{3,r},l_{4,r})$-partition can be computed by
\begin{equation}
    C_r=\sum_{i=1}^4l_{i,r}'C_{i,r},
\end{equation}
where $\eta_{i,r}$ and $C_{i,r}$ are the average transmission and conditional score in their respective partitions.
Similarly, we have that the average transmission of the sub-strategy is bounded
\begin{equation}
    \eta_r=\sum_{i=1}^4l_{i,r}'\eta_{i,r}
    \leq[(l_{1,r}'+l_{3,r}')+(l_{2,r}'+l_{4,r}')\etath]
    =(l_{1,r}'+l_{3,r}')(1-\etath)+\etath,
\end{equation}
noting that $\sum_{i=1}^4l_{i,r}=2^{2n}$ since the partitions sum to include all $(x,y)$ pairs.
We can thus simplify the score
\begin{equation}
\begin{split}
    C_r<& l_{1,r}'\eta_{1,r}+l_{2,r}'\eta_{2,r}+(l_{3,r}'\eta_{3,r} +l_{4,r}'\eta_{4,r})\cthnorm\\
    =&\eta_r-(l_{3,r}'+l_{4,r}')(1-\cthnorm)\\
    \leq&\eta_r-l_{3,r}'\etath(1-\cthnorm),
\end{split}
\end{equation}
where the second line substitutes the average transmission equation in and the final inequality uses the fact that $l_{4,r}'\geq 0$ and $\eta_{3,r}\geq\etath$.
We consider two cases, one where $1-\nu>\frac{\eta_r-\etath}{1-\etath}$ and one where $1-\nu\leq\frac{\eta_r-\etath}{1-\etath}$.
In the first case, we can simply bound $C_r\leq\eta_r$, which is true by definition of the score.
In the second case, the transmission bound and the assumption of $l_{1,r}'\leq 1-\nu$ allow us to lower bound the size of the third partition,
\begin{equation}
    l_{3,r}'\geq\frac{\eta_r-\etath}{1-\etath}-(1-\nu).
\end{equation}
Therefore, the score can be bounded by
\begin{equation}
    C_r\leq\eta_r-\etath(1-\cthnorm)\left[\frac{\eta_r-\etath}{1-\etath}-1+\nu\right].
\end{equation}
We note that when $1-\nu>\frac{\eta_r-\etath}{1-\etath}$, $\eta_r-\etath(1-\cthnorm)\left[\frac{\eta_r-\etath}{1-\etath}-1+\nu\right]\geq\eta_r$.
This allows us to combine the score across both cases as
\begin{equation}
    C_r\leq\min\left\{\eta_r-\etath(1-\cthnorm)\left[\frac{\eta_r-\etath}{1-\etath}-1+\nu\right],\eta_r\right\}
\end{equation}

\section{Acknowledegments}
This research received no external funding. Disclaimer: This paper was prepared for informational purposes by the Global Technology Applied Research center of JPMorgan Chase \& Co. This paper is not a product of the Research Department of JPMorgan Chase \& Co. or its affiliates. Neither JPMorgan Chase \& Co. nor any of its affiliates makes any explicit or implied representation or warranty and none of them accept any liability in connection with this paper, including, without limitation, with respect to the completeness, accuracy, or reliability of the information contained herein and the potential legal, compliance, tax, or accounting effects thereof. This document is not intended as investment research or investment advice, or as a recommendation, offer, or solicitation for the purchase or sale of any security, financial instrument, financial product or service, or to be used in any way for evaluating the merits of participating in any transaction.
All contributions by IWP and CL were made while working at JPMorgan Chase.

\bibliography{QPV_Ref}

\newpage

\foreach \x in {1,...,10}
{%
\clearpage
\includepdf[pages={\x}]{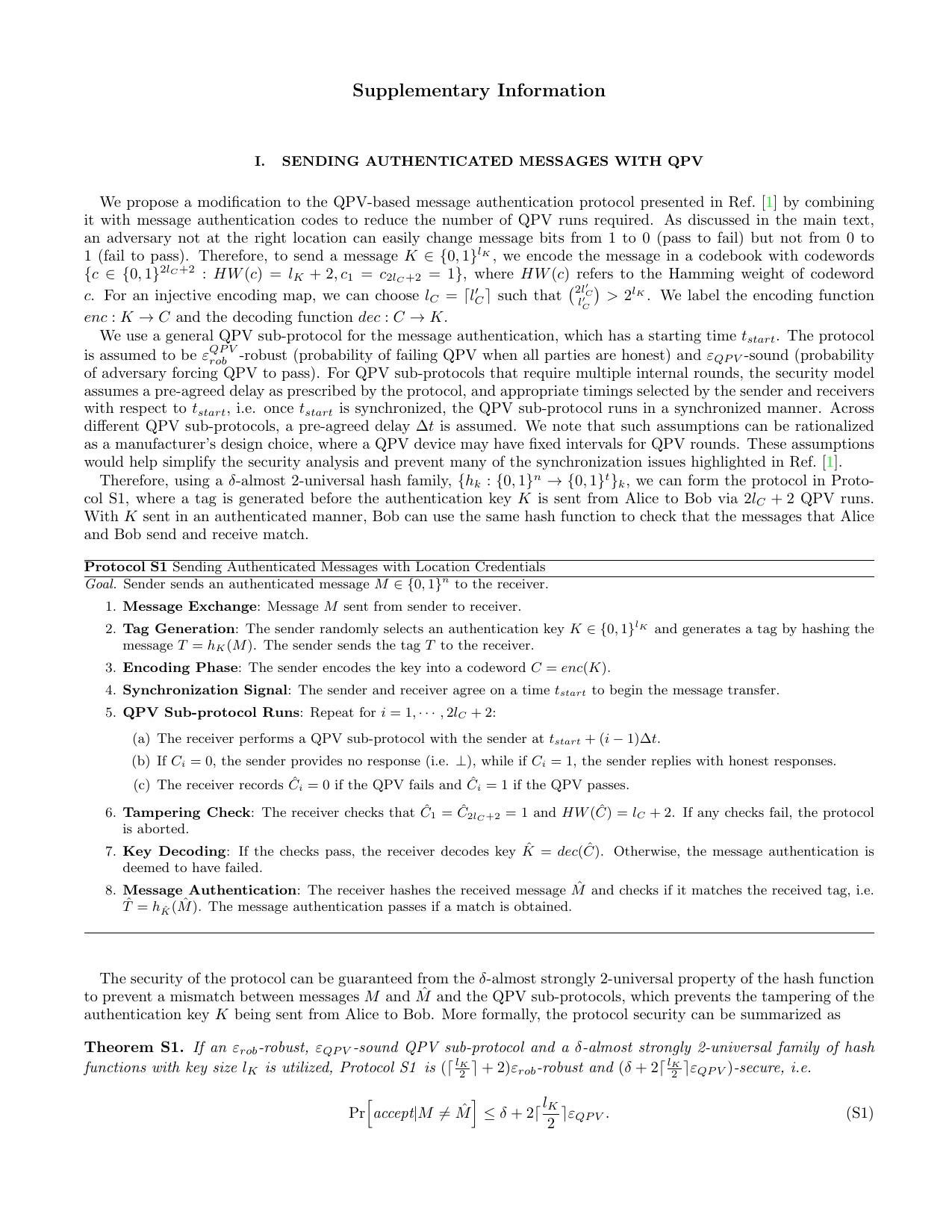} 
}


\end{document}